\newcommand{\ignore}[1]{}
\newcommand{\goesto}[1]{\xrightarrow{#1}}
\newcommand{\lang}[1]{L(#1)}
\newcommand{\from}{\leftarrow}
\newcommand{\A}{\mathcal A}
\newcommand{\B}{\mathcal B}
\newcommand{\N}{\mathbb N}
\newcommand{\Z}{\mathbb Z}
\newcommand{\Q}{\mathbb Q}
\newcommand{\R}{\mathbb R}
\newcommand{\sep}{\mid}
\newcommand{\tuple}[1]{(#1)}
\newcommand{\set}[1]{\left\{ #1 \right\}}
\newcommand\limplies\to
\newif\ifstartedinmathmode
\newcommand*{\st}{
  \relax\ifmmode\startedinmathmodetrue\else\startedinmathmodefalse\fi
  \ifstartedinmathmode{\;.\;}\else{s.t.~}\fi%
}
\newcommand{\ie}{i.e.\xspace}
\newcommand{\card}[1]{|#1|}
\newcommand{\NLOGSPACE}{\textsf{NLOGSPACE}\xspace}
\newcommand{\PTIME}{\textsf{PTIME}\xspace}
\newcommand{\NP}{\textsf{NP}\xspace}
\newcommand{\PSPACE}{\textsf{PSPACE}\xspace}
\newcommand{\PSPACEc}{\textsf{PSPACE}-c.\xspace}
\newcommand{\EXPTIME}{\textsf{EXPTIME}\xspace}
\newcommand{\EXPTIMEc}{\textsf{EXPTIME}-c.\xspace}
\newcommand{\UNDEC}{undec.}
\newcommand{\NCk}[1]{$\textsf{NC}^{#1}$}
\newcommand{\UUCFL}{\textsf{UUCFG}\xspace}
\newcommand{\UUCFG}{\textsf{UUCFG}\xspace}
\newcommand{\SQRTSUM}{\textsf{SQRTSUM}\xspace}
\newcommand{\DFA}{\textrm{DFA}\xspace}
\newcommand{\NFA}{\textrm{NFA}\xspace}
\newcommand{\UFA}{\textrm{UFA}\xspace}
\newcommand{\DCFG}{\textrm{DCFG}\xspace}
\newcommand{\UCFG}{\textrm{UCFG}\xspace}
\newcommand{\CFG}{\textrm{CFG}\xspace}
\newcommand{\CFGs}{\textrm{CFGs}\xspace}
\newcommand{\DCFL}{\textrm{DCFL}\xspace}
\newcommand{\SCFG}{\textrm{SCFG}\xspace}
\newcommand{\SH}{\PTIME \cite{StearnsHunt:Unambiguous}}
\newcommand{\abs}[1]{\left|#1\right|}
\newcommand{\divides}{\,|\,}
\newcommand{\op}{\mathsf{op}}
\newcommand{\conv}{*}
\newcommand\convpoly[2]{#1_{\conv}[#2]} 
\newcommand\poly[2]{#1[#2]} 
\newcommand\coefficient[2]{[#1]#2}
\newcommand{\stkout}[1]{\ifmmode\text{\sout{\ensuremath{#1}}}\else\sout{#1}\fi}
\newtheorem{theorem}{Theorem}
\newtheorem{lemma}[theorem]{Lemma}
\theoremstyle{remark}
\newtheorem{claim}[theorem]{Claim}
\newtheorem*{claim*}{Claim}
\title{On the complexity of the universality and inclusion problems for unambiguous context-free grammars \\ (Invited Paper)}
\author{Lorenzo Clemente
\institute{Department of Mathematics, Informatics, and Mechanics (MIMUW)}
\institute{University of Warsaw (Warsaw, Poland)%
\thanks{This work has been partially supported by the Polish NCN grant 2017/26/D/ST6/00201. This is a technical report of an invited paper of the same title to appear in the proceedings of VPT 2020.}}
\email{clementelorenzo@gmail.com}
}
\begin{document}
\maketitle

\begin{abstract}
	We study the computational complexity of universality and inclusion problems
	for unambiguous finite automata and context-free grammars.
	We observe that several such problems can be reduced to the universality problem for unambiguous context-free grammars.
	The latter problem has long been known to be decidable
	and we propose a \PSPACE algorithm
	that works by reduction to the zeroness problem of recurrence equations with convolution.
	We are not aware of any non-trivial complexity lower bounds.
	However, we show that computing the coin-flip measure of an unambiguous context-free language,
	a quantitative generalisation of universality,
	is hard for the long-standing open problem \SQRTSUM.
\end{abstract}

\section{Introduction}

The purpose of this note is to attract attention
to a long-standing open problem in formal language theory.
The problem in question is the exact complexity of deciding universality of unambiguous context-free grammars (\UUCFG).
A context-free grammar is \emph{unambiguous} if every accepted word
admits a unique parse tree,
and the universality problems asks,
for a given grammar $G$ over a finite set of terminals $\Sigma$ (alphabet),
whether $G$ accepts every word $\lang G = \Sigma^*$.
While the universality problem for context-free grammars is undecidable \cite{HopcroftUllman:1979},
the same problem for unambiguous grammars is long-known to be decidable
(a corollary of \cite[Theorem 5.5]{SalomaaSoittola:Book:PowerSeries:1978}),
e.g., by reducing to the first-order theory of the reals with one quantifier alternation
\cite[eq.~(3), page 149]{SalomaaSoittola:Book:PowerSeries:1978}.
Since the latter fragment is decidable in \EXPTIME \cite{Grigorev:JSC:1988},
this yields an \EXPTIME upper bound for \UUCFG.
%
%
%
%
No non-trivial lower bound for \UUCFG seems to be known in the literature.
%

The typical way to solve a containment problem of the form $L \subseteq M$
is to complement $M$ and solve $L \cap (\Sigma^* \setminus M) = \emptyset$.
For instance, when $L$ is regular and $M$ is deterministic context-free (\DCFG),
this gives a \PTIME procedure since \DCFG languages are efficiently closed under complement
and intersection with regular languages,
and their emptiness problem is in \PTIME.
However, \UCFG languages are not closed under complement
(the complement is not even context-free in general \cite{Hibbard:Ullian:UCFL:1966}),
so the language-theoretic approach is not available.
As Salomaa and Soittola remark in their book from 1978,
``no proof is known for Theorem 5.5 which uses only standard formal language theory''.
To this day, we are not aware of a proof of decidability for \UUCFG using different techniques%
\footnote{
    In a later book, Kuich and Salomaa reprove decidability \cite[Corollary 16.25]{KuichSalomaa:1986}
    by using variable elimination,
    which is arguably closer to algebraic geometry than formal languages.
}.
The \UUCFG problem is not isolated in this respect.

\paragraph{State of the art.}

Let $\A, \B$ be two classes of language acceptors.
Examples include deterministic (\DFA), unambiguous (\UFA), and nondeterministic finite automata (\NFA),
and similarly for context-free grammars we have the classes \DCFG, \UCFG, and \CFG.
The ``$\A \subseteq \B$'' \emph{inclusion problem} asks, 
given a language acceptor $A$ from $\A$ and $B$ from $\B$,
whether the languages they recognise satisfy $\lang A \subseteq \lang B$.
A summary of decidability and complexity result for inclusion problems
involving finite automata and grammars
is presented in \cref{fig:table}.
Many entries in the table are well-known.
The problem $\NFA \subseteq \NFA$ is a classic \PSPACE-complete problem \cite{MeyerStockmeyer:Equivalence:1972}.
The problem $\UFA \subseteq \UFA$ was shown in \PTIME
by Stearns and Hunt in their seminal paper \cite{StearnsHunt:Unambiguous}%
\footnote{
    An incomparable \NCk 2 upper bound for this problem is also known
    \cite[Fact 4.5]{MassazzaSabadini:TAPSOFT:1989} (c.f.~\cite[Theorem 2]{Tzeng:IPL:1996}).
}.
The fact that $\CFG \subseteq \NFA$ is \EXPTIME-complete is somewhat less known \cite[Theorem 2.1]{Kasai:Iwata:1992}.
%
%
%
%
%
The inclusion problems $\A \subseteq \UFA$ when $\B$ is $\DCFG$, $\UCFG$, or $\CFG$
do not appear to have been studied before.
The $\A \subseteq \B$ problem is undecidable as soon as both $\A, \B$ are context-free grammars,
since $\DCFG \subseteq \DCFG$ is well-known to be undecidable \cite[Theorem 10.7, Point 2]{HopcroftUllman:1979}.
We have already observed that $\NFA \subseteq \DCFG$ is in \PTIME.
The equivalence problem $\NFA = \UCFG$
is shown to be decidable in \cite[Theorem 5.5]{SalomaaSoittola:Book:PowerSeries:1978},
although no complexity bound is given.
The more general inclusion $\NFA \subseteq \UCFG$ does not seem to have been studied before.

%

%


\begin{figure}
    \begin{center}
    	\begin{tabular}{l|l|l|l|l|l|l}
    		${\subseteq}$\hfill	& \DFA		& \UFA			& \NFA 		& \DCFG		& \UCFG		& \CFG \\
    		\hline
    		\DFA				& \PTIME	& \PTIME		& \PSPACEc \cite{MeyerStockmeyer:Equivalence:1972}	& \PTIME	& =\UUCFL (Th.~\ref{thm:NFA:UCFG})	& \UNDEC \\
    		\UFA				& \PTIME    & \SH 			& \PSPACEc \cite{MeyerStockmeyer:Equivalence:1972}	& \PTIME	& =\UUCFL (Th.~\ref{thm:NFA:UCFG})	& \UNDEC \\
    		\NFA				& \PTIME	& \PTIME (Th.~\ref{thm:NFA:UFA}) & \PSPACEc \cite{MeyerStockmeyer:Equivalence:1972}	& \PTIME	& =\UUCFL  (Th.~\ref{thm:NFA:UCFG})	& \UNDEC \\
    		\DCFG				& \PTIME	& $\leq$\UUCFL	(Th.~\ref{thm:CFG:UFA}) & \EXPTIMEc	\cite{Kasai:Iwata:1992} & \UNDEC	& \UNDEC	& \UNDEC \\
    		\UCFG				& \PTIME	& $\leq$\UUCFL	(Th.~\ref{thm:CFG:UFA})& \EXPTIMEc \cite{Kasai:Iwata:1992} & \UNDEC    & \UNDEC	& \UNDEC \\
    		\CFG				& \PTIME	& $\leq$\UUCFL (Th.~\ref{thm:CFG:UFA})	& \EXPTIMEc	\cite{Kasai:Iwata:1992} & \UNDEC	& \UNDEC	& \UNDEC
    	\end{tabular}
	\end{center}
	
	\noindent
	``$\leq$\UUCFL'': the problem reduces in \PTIME to \UUCFL.
	
	\noindent
	``$=$\UUCFL'': the problem is \PTIME inter-reducible with \UUCFL.
	
	\caption{Inclusion problems for various classes of regular and context-free languages.}
	\label{fig:table}
\end{figure}

\paragraph{Contributions.}

We establish several connections between inclusion problems $\A \subseteq \B$
when $\B$ is \UFA or \UCFG with the \UUCFG problem.
Our contributions are as follows.

\begin{enumerate}[1.]

    \item We observe that in many cases the inclusion problem $L \subseteq M$
    reduces in polynomial time to the sub-case where $L$ is deterministic (\cref{sec:det:trick}).
    One application is lower bounds:
    Once we know that $\CFG \subseteq \NFA$ is \EXPTIME-hard \cite[Theorem 2.1]{Kasai:Iwata:1992},
    we can immediately deduce that the same lower bound carries over to $\DCFG \subseteq \NFA$ \cite[Theorem 3.1]{Kasai:Iwata:1992}.
    
    \item We observe that in many cases the inclusion problem $L \subseteq M$ with $L$ deterministic
    reduces in polynomial time to the universality problem (\cref{sec:incl2univ}).
    One application is upper bounds (combined with the previous point):
    For instance, from the fact that  $\UFA = \Sigma^*$ is in \PTIME
    we can deduce that the more general problem $\NFA \subseteq \UFA$ is also in \PTIME (\cref{thm:NFA:UFA}),
    which seems to be a new observation.
    
    \item We apply the last two points to show that the following inclusion problems $\A \subseteq \B$ reduce to \UUCFG:
    $\A \in \set{\DCFG, \UCFG, \CFG}$ and $\B = \UFA$ (\cref{thm:CFG:UFA});
    $\A \in \set{\DFA, \UFA, \NFA}$ and $\B = \UCFG$ (\cref{thm:NFA:UCFG}).
    Since \UUCFG is a special instance of the latter set of problems,
    they are \PTIME inter-reducible with \UUCFG.
    
    \item We show that \UUCFG is in \PSPACE (\cref{thm:UUCFL:PSPACE}),
    which improves the \EXPTIME upper bound that can be extracted from \cite{SalomaaSoittola:Book:PowerSeries:1978}.
    A \PSPACE upper bound for the same problem
    has also been shown by S.~Purgał in his master thesis \cite[Section 3.7]{Purgal:MSC:2018}.
    
    \item We complement the upper bound in the previous point
    by showing that computing the so-called \emph{coin-flip measure} of a \UCFG
    (a quantitative problem generalising universality; c.f.~\cref{sec:coin-flip})
    is \SQRTSUM-hard (\cref{thm:SQRTSUM-hardness}).
    The latter is a well-known problem in the theory of numerical computation,
    which is not known to be in \NP or \NP-hard
    \cite{AllenderBurgisserKjeldgaard-PedersenMiltersen:JC:2009,EtessamiYannakakis:JACM:2009}.
    
\end{enumerate}
The generic and simple polynomial time reductions of points 1.~and 2.~above
do not seem to be known in the literature.
Beyond the seminal work on $\UFA$ \cite{StearnsHunt:Unambiguous},
they also apply to very recent contributions on expressive models such as unambiguous register automata (c.f.~\cite{MottetQuaas:STACS:2019} for equality atoms)
and unambiguous finite and pushdown Parikh automata \cite{BostanCarayolKoechlinNicaud:ICALP:2020}.
In each of the cases above,
one can reduce from inclusion to universality.
A non-example where the reduction cannot be applied
is unambiguous Petri-nets with coverability semantics \cite{CzerwinskiFigueiraHofman:UVASS:2020}.


The \PSPACE upper bound on \UUCFG is obtained by reduction to a more general counting problem interesting on its own.
We introduce a natural class of number sequences ${f : \N \to \N}$
which we call \emph{convolution recursive} (conv-rec).
Examples include the Fibonacci $F(n+1) = F(n) + F(n-1)$
and Catalan numbers ${C(n+1) = (C \conv C)(n)}$,
where ``$\conv$'' denotes the convolution product.
We show that the function counting the number of words in $\lang G$ of a given length is conv-rec if $G$ is \UCFG.
(This result is analogous to the well-known
fact that \UCFG have algebraic generating functions
\cite{ChomskySchutzenberger:Algebraic:1963}.)
The \emph{zeroness problem} asks whether such a sequence is identically zero.
Our last contribution is a complexity upper-bound for the zeroness problem of conv-rec sequences.
\begin{enumerate}[6.]
    \item We show that the zeroness problem of conv-rec sequences is in \PSPACE (\cref{thm:zeroness}).
    We express this problem with a formula in the existential fragment for first-order logic over the reals,
    which can be decided in \PSPACE \cite{Canny:1988:STOC:1988}.
    %
    %
\end{enumerate}

\section{Convolution recursive sequences and their zeroness problem}

\paragraph{Convolution recursive sequences.}
Let $\N$, $\Z$, $\Q$, and $\R$ be the sets of natural, resp., integer, rational, and real numbers.
Let $\poly \Q {x_1, \dots, x_k}$ denote the ring of polynomials with coefficients from $\Q$
and variables $x_1, \dots, x_k$.
For two sequences indexed by natural numbers $f, g : \N \to \R$,
their \emph{sum} $f + g$ is the sequence $(f+g)(n) = f(n) + g(n)$,
and their \emph{convolution} is the sequence $(f \conv g)(n) = \sum_{k=0}^n f(k) \cdot g(n-k)$.
The convolution operation is associative $f \conv (g \conv h) = (f \conv g) \conv h$,
commutative $f \conv g = g \conv f$,
has as (left and right) identity the sequence $1,0,0,\dots$,
and distributes over the sum operation $(f + g) \conv h = f \conv g + g \conv h$.
Thus, sequences with the operations ``$+$'' and ``$\conv$'' form a semiring.
Let $\sigma : (\N \to \R) \to (\N \to \R)$ be the \emph{(forward) shift operator} on sequences,
which is defined as $(\sigma f)(n) = f(n+1)$.
The \emph{zeroness} problem for a sequence $f : \N \to \R$
amounts to decide whether $f(n) = 0$  for every $n \in \N$.

A \emph{convolution polynomial} $p(x_1, \dots, x_k)$
is a polynomial where the multiplication operation is interpreted as convolution
and a constant $k \in \Q$ is interpreted as the sequence $k,0,0,\dots$.
For example, $4 \conv (x_1 \conv x_2) + 3 \conv (x_2 \conv x_2)$
is a convolution polynomial of two variables $x_1, x_2$.
Let $\convpoly \Q {x_1, \dots, x_k}$ denote the ring of convolution polynomials with variables $x_1, \dots, x_k$.
A sequence $f : \N \to \R$ is \emph{convolution recursive} (\emph{conv-rec})
if there are $k$ \emph{auxiliary sequences} $f_1, \dots, f_k : \N \to \R$ with $f_1 = f$
and $k$ convolution polynomials $p_1, \dots, p_k \in \convpoly \Q {x_1, \dots, x_k}$
s.t., 
\begin{align}
	\label{eq:convrec}
	\left\{
		\begin{array}{lcl}
			\sigma f_1 &=& p_1(f_1, \dots, f_k), \\
			&\vdots& \\
			\sigma f_k &=& p_k(f_1, \dots, f_k).
		\end{array}
	\right.
\end{align}
The \emph{combined degree} of the representation above is the sum of the degrees of $p_1, \dots, p_k$.
For example, the Catalan numbers $C : \N \to \N$
are conv-rec (of combined degree two)
since $(\sigma C)(n) = (C \conv C)(n)$.

\begin{lemma}
	\label{lem:bounded-ratio}
	Let $f : \N \to \R$ be a conv-rec sequence of combined degree $\leq d$.
	Then ${\lim_{n \to \infty} \frac {f(n+1)} {f(n)} = O(d)}$.
\end{lemma}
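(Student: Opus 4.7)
The plan is to pass to ordinary generating functions. Let $F_i(z) := \sum_{n \geq 0} f_i(n) z^n$; the recursion translates into the polynomial system
\[
F_i(z) \;=\; f_i(0) + z \, P_i(F_1(z), \ldots, F_k(z)), \qquad i = 1, \ldots, k,
\]
where $P_i$ is the ordinary polynomial obtained from $p_i$ by reinterpreting $\conv$ as ordinary multiplication of power series. The combined-degree hypothesis gives $\sum_i \deg P_i \leq d$. It suffices to lower-bound the radius of convergence $\rho$ of $F_1$ by $\Omega(1/d)$: Cauchy's coefficient estimate then yields $|f(n)| \leq M \rho^{-n}$, so $\limsup_n |f(n+1)/f(n)| \leq 1/\rho = O(d)$.

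To lower-bound $\rho$, I would first \emph{majorise}: replacing each coefficient of each $P_i$ by its absolute value yields a system $\bar F_i = |f_i(0)| + z \bar P_i(\bar F_1, \ldots, \bar F_k)$ with non-negative coefficients whose coefficients dominate $|f_i(n)|$ termwise, so it suffices to bound the radius of $\bar F_1$. For the majorised system, the map $T \colon (G_1, \ldots, G_k) \mapsto (|f_i(0)| + z \bar P_i(G))_i$ is componentwise monotone in $G \geq 0$ when $z \geq 0$, and its iterates starting from $G^{(0)} = 0$ converge coefficientwise to $\bar F$. Hence, exhibiting positive reals $M_1, \ldots, M_k$ such that $M_i \geq |f_i(0)| + z \bar P_i(M_1, \ldots, M_k)$ for every $i$ and every $z \in [0, r]$ proves $\bar F_i(z) \leq M_i$ on this interval, and therefore $\rho \geq r$.

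The remaining task is to choose the $M_i$ so that $r = \Omega(1/d)$. The benchmark case $\bar F = 1 + z \bar F^d$ (Fuss--Catalan), whose sharp radius $(d-1)^{d-1}/d^d \sim 1/(ed)$ is attained at $M = d/(d-1) \approx 1 + 1/d$, suggests taking $M_i$ of the form $A_i(1 + c/d)$, after first rescaling the system so that the coefficients and initial values of the normalised system are $O(1)$ (this rescaling affects only the hidden constant in $O(d)$). The defining inequality then reduces, via $\sum_i \deg P_i \leq d$ and $(1 + c/d)^d \leq e^c$, to an arithmetic check yielding $r \geq c'/d$ for a constant $c'$ depending on the coefficients of the $p_i$. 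The main obstacle is the bookkeeping across the $k$ components, since the degree bound is global rather than per-component; one expects a Perron--Frobenius-style weighting of the $A_i$ to balance the inequalities uniformly, which is exactly where the naive choice $M_i = 2|f_i(0)|$ breaks down and yields only the useless bound $r = 1/\exp(\Omega(d))$.
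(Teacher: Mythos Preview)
Your argument has a genuine gap at the step ``Cauchy's coefficient estimate then yields $|f(n)| \leq M \rho^{-n}$, so $\limsup_n |f(n+1)/f(n)| \leq 1/\rho$.'' A bound of the form $|f(n)| \leq M \rho^{-n}$ controls $\limsup_n |f(n)|^{1/n}$ but says nothing about consecutive ratios: a sequence that is $1$ on powers of two and $2^{-n}$ elsewhere has $\rho \geq 1$ yet $f(n+1)/f(n)$ unbounded. To extract a ratio statement from analytic data you would need transfer theorems or singularity analysis for algebraic functions (giving $f(n) \sim C\,\alpha^n n^{\beta}$), a substantially heavier tool that you do not invoke. Separately, your majorisation scheme is left as a sketch exactly at the delicate point: you note that the multi-component case needs a Perron--Frobenius-style weighting of the $A_i$ but do not carry it out, so the claimed bound $r \ge c'/d$ is not actually established.

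The paper takes a shorter, more direct route: it asserts that the extremal growth is attained by the single recursion $\sigma f = f^{\conv d}$, whose solution with $f(0)=1$ is the Fuss--Catalan sequence $f(n) = \binom{dn+1}{n}\frac{1}{dn+1}$, and then computes $\lim_n f(n+1)/f(n) = d^d/(d-1)^{d-1} \le e\,d$ directly via Stirling. Your generating-function majorisation, if completed, would yield the radius-of-convergence lower bound $\rho = \Omega(1/d)$, and that is precisely what the lemma is \emph{used} for downstream (convergence of $g_f(x^*)$ for $0 \le x^* < 1/d$); so your approach is aimed at the right target for the application, but it does not prove the ratio statement that the lemma actually asserts.
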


\begin{proof}
	The maximal relative growth $\frac {f(n+1)} {f(n)}$ of a conv-rec sequence
	is achieved when $f$ satisfies a recurrence of the form
	%
		$\sigma f = {f \conv \cdots \conv f}$ ($d$ times)
	%
	for some degree $d \in \N$.
	If $f(0) = 1$, then the resulting sequence is known as the Fuss-Catalan numbers
	\cite{GrahamKnuthPatashnik:CM:1994}
	and it equals $f(n) = {d \cdot n + 1 \choose n} \frac 1 {d \cdot n+1}$.
	It can be checked by using Stirling's approximation $n! \sim \sqrt{2\pi n} \left(\frac n e\right)^n$
	that $\lim_{n \to \infty} \frac {f(n+1)} {f(n)} = \frac {d^d}{(d-1)^{d-1}} = d \cdot (1 + \frac 1 {d-1})^{d-1}$.
	The latter quantity is upper bounded by $d \cdot e$ for every $d \geq 1$.
\end{proof}

\paragraph{Generatingfunctionology.}

The \emph{formal power series} (a.k.a.~\emph{ordinary generating function})
associated with a number sequence $a : \N \to \R$
is the infinite polynomial 
%
	$g_a(x) = \sum_{n=0}^\infty a(n) \cdot x^n$. 
%
Let $\coefficient {x_n} g_a$ denote the coefficient $a(n)$ of $x^n$ in $g_a$.
%
%
%
Let $f, f_1, f_2 : \N \to \R$ be sequences.
It is well known that $g_k(x) = k$ for $k \in \R$,
$g_{f_1+f_2} = g_{f_1} + g_{f_2}$,
$g_{f_1 \conv f_2} = g_{f_1} \cdot g_{f_2}$,
and $g_f(x) = f(0) + x \cdot g_{\sigma f}(x)$.
Consequently, if $f_1$ is conv-recursive with auxiliary sequences $f_1, \dots, f_k$,
then their generating functions $g_{f_1}, \dots, g_{f_k}$
satisfy the following system of polynomial equations
\begin{align}
	\label{eq:gf}
	\left\{
		\begin{array}{rcl}
			g_{f_1}(x) &=& f_1(0) + x \cdot \hat p_1(g_{f_1}(x), \dots, g_{f_k}(x)), \\
			&\vdots& \\
			g_{f_k}(x) &=& f_k(0) + x \cdot \hat p_k(g_{f_1}(x), \dots, g_{f_k}(x)).
		\end{array}
	\right.
\end{align}
where $\hat p_i$ is the polynomial obtained from the convolution polynomial $p_i$
by replacing the convolution operation ``$\conv$'' on sequences
by the product operation ``$\cdot$'' on real numbers.
Thus, the generating function $g_f$ of a conv-rec sequence $f$ is \emph{algebraic}.

\begin{lemma}
	\label{lem:unique}
	The system of equations \eqref{eq:gf} has a unique formal power series solution.
\end{lemma}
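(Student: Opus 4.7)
The plan is a straightforward coefficient-by-coefficient induction exploiting the factor $x$ that appears on the right-hand side of each equation. Writing $g_{f_i}(x) = \sum_{n \geq 0} a_i(n) x^n$, I would first observe that setting $x = 0$ in \eqref{eq:gf} forces $a_i(0) = f_i(0)$ for every $i \in \{1, \dots, k\}$, so the constant terms are uniquely determined.

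Next I would argue by induction on $n \geq 1$ that the coefficients $a_1(n), \dots, a_k(n)$ are determined by $a_1(0), \dots, a_k(0), \dots, a_1(n-1), \dots, a_k(n-1)$. The key point is that the right-hand side of the $i$-th equation has the form $f_i(0) + x \cdot \hat p_i(g_{f_1}, \dots, g_{f_k})$, so extracting the coefficient of $x^n$ gives
\begin{align*}
    a_i(n) = \coefficient{x^{n-1}}{\hat p_i(g_{f_1}, \dots, g_{f_k})}.
\end{align*}
Since $\hat p_i$ is a polynomial (with no constant factor playing a role here beyond the $x$ factor already extracted), the coefficient of $x^{n-1}$ in $\hat p_i(g_{f_1}, \dots, g_{f_k})$ is a polynomial expression in coefficients $a_j(m)$ with $m \leq n-1$. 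Hence $a_i(n)$ is uniquely determined by the coefficients of strictly smaller index, completing the induction.

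Conversely, the formal power series whose coefficients are defined by this recursion is easily verified to satisfy \eqref{eq:gf}, so a solution exists and uniqueness is established. I do not foresee any real obstacle: the argument is the standard Banach-style fixed point argument for formal power series, specialised to the concrete observation that multiplication by $x$ strictly raises the minimum index of nonzero coefficients, which is exactly what makes the recursion well-founded.
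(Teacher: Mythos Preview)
Your proposal is correct and follows essentially the same approach as the paper: a coefficient-by-coefficient induction on $n$, using that the factor $x$ on the right-hand side ensures $\coefficient{x^n}{g_i}$ depends only on coefficients of index $\leq n-1$. The paper phrases existence as ``by construction'' (since the $g_{f_i}$ arise from sequences satisfying \eqref{eq:convrec}) while you obtain it from the well-foundedness of the recursion, but this is a cosmetic difference.
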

\begin{proof}
	By construction, $g_f = (g_{f_1}, \dots, g_{f_k})$ is a formal power series solution of \eqref{eq:gf}.
	We now argue that there is no other solution.
	Assume that $g = (g_1, \dots, g_k)$ is a solution of \eqref{eq:gf}.
	We prove that, for every $n \in \N$,
	$\coefficient {x^n} g = (\coefficient {x^n} {g_1}, \dots, \coefficient {x^n} {g_1})$
	equals $\coefficient {x^n} {g_f} = (\coefficient {x^n} {g_{f_1}}, \dots, \coefficient {x^n} {g_{f_k}})$.
	The base case follows immediately from \eqref{eq:gf},
	since $\coefficient {x^0} {g_i} = f_i(0)$ by definition.
	For the inductive step $n > 0$, notice that
	\begin{inparaenum}[1)]
		\item from \eqref{eq:gf} we have
		$\coefficient {x^n} {g_i} = \coefficient {x^n} {(x\cdot\hat p_i(g))} = \coefficient {x^{n-1}} {\hat p_i(g)}$, and
		\item the latter quantity is a (polynomial) function
		of the coefficients $\coefficient {x^i} g$ for $0 \leq i \leq n-1$.
	\end{inparaenum}
	By inductive assumption,
	$\coefficient {x^i} g = \coefficient {x^i} {g_f}$ for every $0 \leq i \leq n-1$,
	and thus by the two observations above
	$\coefficient {x^n} g = \coefficient {x^n} {g_f}$.
\end{proof}

\begin{lemma}
	\label{cor:unique}
	Let $d$ be the combined degree of $f = \tuple{f_1, \dots, f_k}$.
	The system \eqref{eq:gf}
	has a unique solution
	$g_f(x^*) = \tuple {g_{f_1}(x^*), \dots, g_{f_k}(x^*)} \in \R^k$
	for every $0 \leq x^* < \frac 1 d$.
\end{lemma}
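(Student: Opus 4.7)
The plan is to reduce uniqueness at a real point $x = x^*$ to the formal-power-series uniqueness of Lemma~\ref{lem:unique}, using Lemma~\ref{lem:bounded-ratio} to certify that the unique formal solution actually converges at $x^*$ to a real tuple.

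First, I would argue that each auxiliary generating function $g_{f_i}(x) = \sum_{n \ge 0} f_i(n)\, x^n$ converges absolutely on $|x| < 1/d$. Lemma~\ref{lem:bounded-ratio} can be applied to each coordinate $f_i$ in turn (after permuting the system so that $f_i$ plays the role of the distinguished first component, which leaves the combined degree unchanged), yielding $\limsup_n |f_i(n)|^{1/n} = O(d)$. By the Cauchy--Hadamard formula this gives radius of convergence of order $1/d$, and the exact constant comes by tracking the Fuss--Catalan estimate in the proof of Lemma~\ref{lem:bounded-ratio}.

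Second, I would show that $g_f(x^*) = \tuple{g_{f_1}(x^*), \ldots, g_{f_k}(x^*)}$ is itself a real-number solution of the evaluated system. On its radius of convergence, absolute convergence lets one transfer the formal identities $g_{a+b}(x) = g_a(x) + g_b(x)$ and $g_{a \conv b}(x) = g_a(x)\, g_b(x)$ (the latter via Mertens' theorem on Cauchy products of absolutely convergent series) from formal power series to genuine equalities of real numbers. Substituting $x = x^*$ into \eqref{eq:gf} then directly exhibits $g_f(x^*)$ as a solution.

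Third, for uniqueness in $\R^k$, I would proceed by analytic continuation from $x = 0$. The polynomial map $G(g, x) = g - f(0) - x \cdot \hat p(g)$ has Jacobian equal to the identity at $(g_f(0), 0)$, so the implicit function theorem produces a unique real-analytic solution branch on an initial interval; by Lemma~\ref{lem:unique} this branch must agree with $g_f$. The main obstacle is extending uniqueness from a small neighbourhood of $0$ to the full half-open interval $[0, 1/d)$: I would handle this by observing that the Jacobian $I - x \cdot D\hat p(g_f(x))$ fails to be invertible only at a branch point of the algebraic function $g_f$, and that such a branch point occurs at the reciprocal of the exponential growth rate of $f$, which by Lemma~\ref{lem:bounded-ratio} is at least $1/d$. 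Hence no second real branch can cross the distinguished one before $x^* = 1/d$, and $g_f(x^*)$ is the only real solution of \eqref{eq:gf}.
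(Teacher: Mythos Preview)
Your first two paragraphs essentially reproduce---with more care---the paper's own proof, which is very short: it invokes \cref{lem:bounded-ratio} for convergence of each $g_{f_i}(x^*)$, then cites \cref{lem:unique} for the uniqueness of the formal power series solution, and stops. The paper does \emph{not} argue that the evaluated system has a unique solution in $\R^k$; it only records that the unique formal power series solution happens to converge to a real tuple at $x^*$.

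Your third paragraph goes beyond the paper by attempting genuine $\R^k$-uniqueness, and there the argument has a real gap. The implicit function theorem together with the absence of branch points along $g_f$ guarantees that the distinguished analytic branch through $(0, f(0))$ extends smoothly---but this says nothing about \emph{other} real branches that do not pass through $(0, f(0))$. Concretely, take the Catalan recurrence $\sigma f = f \conv f$, $f(0)=1$, so $d=2$: for each $x^* \in (0, \tfrac14)$ the evaluated equation $g = 1 + x^* g^2$ has two real solutions $g_\pm = (1 \pm \sqrt{1-4x^*})/(2x^*)$. The branch $g_-$ is your distinguished $g_f$; the branch $g_+$ tends to $+\infty$ as $x^* \to 0^+$, never crosses $g_-$, and is invisible to any local analysis at $x=0$. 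So ``no second branch crosses the distinguished one'' does not yield ``no second branch exists'', and $\R^k$-uniqueness fails in general.

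In summary: your first two steps match what the paper actually proves; your third step attempts something the paper does not prove and which, as the Catalan example shows, is not true as stated.
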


\begin{proof}
	Let $g_f = \tuple{g_{f_1}, \dots, g_{f_k}}$
	be the tuple of formal power series of the sequences $f_1, \dots, f_k$.
	By \cref{lem:bounded-ratio},
	$\lim_{n\to\infty} \frac {f_i(n+1)} {f_i(n)} = O(d)$.
	Thus, $g_f(x^*) = \tuple {g_{f_1}(x^*), \dots, g_{f_k}(x^*)} \in \R^k$
	converges for every $0 \leq x^* < \frac 1 d$.
	%
	By \cref{lem:unique},
	$g_f$ is the unique formal power series solution of \eqref{eq:gf}.
\end{proof}





\begin{theorem}
	\label{thm:zeroness}
	The zeroness problem for conv-rec sequences is in \PSPACE.
\end{theorem}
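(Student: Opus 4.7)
The plan is to reduce the zeroness problem to the complement of the existential first-order theory of the reals, which by Canny \cite{Canny:1988:STOC:1988} is decidable in \PSPACE. Let $(f_1, \ldots, f_k)$ be the conv-rec system given as input, with combined degree $d$, and let $\hat p_1, \ldots, \hat p_k$ be the polynomials appearing in the system \eqref{eq:gf}.

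The heart of the argument is to show that $f_1 \not\equiv 0$ if and only if the following existential sentence is satisfiable over $\R$:
\[
\varphi \;:=\; \exists x^*, y_1, \ldots, y_k \colon\; 0 < x^* < 1/d \;\land\; y_1 \neq 0 \;\land\; \bigwedge_{i=1}^{k} y_i = f_i(0) + x^* \cdot \hat p_i(y_1, \ldots, y_k).
\]
For the forward direction, if $f_1 \not\equiv 0$ then by \cref{lem:bounded-ratio} the generating function $g_{f_1}$ is analytic on $(0, 1/d)$ and not identically zero, hence has only isolated zeros there, so some $x^* \in (0, 1/d)$ satisfies $g_{f_1}(x^*) \neq 0$; taking $y_i := g_{f_i}(x^*)$ yields a witness of $\varphi$, because $(g_{f_1}(x^*), \ldots, g_{f_k}(x^*))$ solves \eqref{eq:gf} at $x = x^*$. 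Conversely, if $(x^*, y_1, \ldots, y_k)$ witnesses $\varphi$, then by the uniqueness of real solutions in the range $[0, 1/d)$ guaranteed by \cref{cor:unique} we must have $y_i = g_{f_i}(x^*)$ for each $i$, so $g_{f_1}(x^*) = y_1 \neq 0$, and thus $g_{f_1}$, and hence $f_1$, is not identically zero.

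The sentence $\varphi$ has size polynomial in the input (the coefficients of the $\hat p_i$, the initial values $f_i(0)$, and the rational $1/d$ all have polynomial bit-length), so Canny's theorem gives satisfiability of $\varphi$ in \PSPACE; since \PSPACE is closed under complement, zeroness is in \PSPACE as well. The main obstacle, already neutralised by the preparatory lemmas, is ruling out spurious real roots of the polynomial system \eqref{eq:gf} unrelated to the actual power-series solution: without uniqueness, a witness with $y_1 \neq 0$ would not certify $g_{f_1}(x^*) \neq 0$. The uniqueness clause of \cref{cor:unique}, restricted to the radius-of-convergence interval $[0, 1/d)$, is precisely what makes the existential reduction correct.
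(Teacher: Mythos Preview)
Your proof is correct and essentially identical to the paper's: both reduce to Canny's decision procedure for the existential theory of the reals, using \cref{cor:unique} to ensure that any real solution of the polynomial system \eqref{eq:gf} at a point $x^* \in [0,1/d)$ coincides with the evaluation of the true generating functions. The only cosmetic difference is that you phrase non-zeroness as the existential sentence $\varphi$ and then complement, whereas the paper writes zeroness directly as the dual universal sentence $\forall x \in [0,1/d)\, \forall \bar y\; (\bar y = f(0) + x\cdot\hat p(\bar y) \to y_1 = 0)$.
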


\begin{proof}
	Let $f_1$ be a conv-rec sequence of combined degree $d$
	with auxiliary sequences $f_2, \dots, f_k$
	satisfying \eqref{eq:convrec}.
	Consider the associated generating functions
	$g = \tuple{g_{f_1}, \dots, g_{f_k}}$.
	Clearly, $f_1(n) = 0$ for every $n \in \N$
	if, and only if, $g_{f_1}(x) = 0$ for every $x$ sufficiently small.
	By \cref{cor:unique},
	$g(x^*)$ is the unique solution of \eqref{eq:gf}
	for every $0 \leq x^* < \frac 1 d$.
	It thus suffices to say that,
	for every $0 \leq x^* < \frac 1 d$,
	\emph{all} solutions $g(x^*)$ of the system \eqref{eq:gf}
	satisfy	$g_{f_1}(x^*) = 0$.
	This can be expressed by the following universal first-order sentence over the reals
	(where $\bar y = \tuple{y_1, \dots, y_k}$)
	\begin{align*}
		\forall \left(0 \leq x < \frac 1 d\right) \st \forall \bar y \st \bar y = f(0) + x \cdot \hat p(\bar y) \limplies y_1 = 0.
	\end{align*}
	%
	The sentence above can be decided in \PSPACE by appealing to the existential theory of the reals
 	\cite[Theorem 3.3]{Canny:1988:STOC:1988}.
\end{proof}



\section{Universality of unambiguous grammars}


Let $\Sigma$ be a finite alphabet.
We denote by $\Sigma^*$ the set of all finite words over $\Sigma$,
including the empty word $\varepsilon$.
A \emph{language} is a subset $L \subseteq \Sigma^*$.
The concatenation of two languages $L, M \subseteq \Sigma^*$
is \emph{unambiguous} if $w \in L\cdot M$ implies that $w$ factors uniquely as $w = u \cdot v$
with $u \in L$ and $M \in v$.
A \emph{context-free grammar} (\CFG) is a tuple $G = \tuple {\Sigma, N, S, {\from}}$
where $\Sigma$ is a finite alphabet of \emph{terminal symbols},
$N$ is a finite set of \emph{nonterminal symbols},
of which $S \in N$ is the \emph{starting nonterminal symbol},
and ${\from} \subseteq N \times (N \cup \Sigma)^*$
is a set of productions.
A \CFG is in \emph{short Greibach normal form}
if productions are of the form either $X \from \varepsilon$.
or $X \from aYZ$.
An \emph{$X$-derivation tree} is a tree satisfying the following conditions:
1) the root node $\varepsilon$ is labelled by the nonterminal $X \in X$,
2) every internal node is labelled by a nonterminal from $N$,
3) whenever a node $u$ has children $u\cdot 1, \dots, u\cdot k$
there exists a rule $Y \from w_1 \cdots w_k$ with $w_i \in N \cup \Sigma$
\st $Y$ is the label of $u$
and $w_i$ is the label of $u\cdot i$, and
4) leaves are labelled with terminal symbols from $\Sigma$.
The \emph{language recognised} by a nonterminal $X$
is the set $\lang X$ of words $w = a_1 \cdots a_n \in \Sigma$
\st there exists an $X$-derivation tree with leaves labelled by (left-to-right) $a_1, \dots, a_n$;
the language recognised by $G$ is the language recognised by the starting nonterminal $\lang G = \lang S$.
A \CFG $G$ is \emph{unambiguous} (\UCFG)
if for every accepted word $w \in \lang G$
there exists exactly one derivation tree witnessing its acceptance.
The universality problem (\UUCFG) asks,
given a \UCFG $G$,
whether $\lang G = \Sigma^*$.

\subsection{Reductions}

In this section present \PTIME reductions from inclusion problems for \NFA and \UCFG to \UUCFG.
This serves us as a motivation to study the complexity of \UUCFG in \cref{sec:UUCFL}.
We proceed in two steps.
In the first step, we present a general l.h.s.~determinisation procedure for inclusion problems (\cref{sec:det:trick}) which is widely applicable to essentially any machine-based model of computation.
In the second step, assuming a deterministic l.h.s., we show a reduction from inclusion to universality (\cref{sec:incl2univ}).
We apply these two reductions in \cref{sec:applications}.

\subsubsection{L.h.s.~determinisation for inclusion problems}
\label{sec:det:trick}

It is an empirical observation that in many inclusion problems of the form $L \subseteq M$
the major source of difficulty is with $M$ and not with $L$.
For example, for finite automata the inclusion problem is \PSPACE-complete when $M$ is presented by a \NFA
and in \NLOGSPACE when it is presented by a \DFA.
In either case, it is folklore that whether $L$ is presented as a \NFA or \DFA does not matter.
A more dramatic example is given when $L$ is regular and $M$ context-free,
since the inclusion above is undecidable when $M$ is presented by a \CFG
and in \PTIME when it is presented by a $\DCFG$.

In this section we give a formal explanation of this phenomenon
by providing a generic reduction of an inclusion problem as above
to one where the l.h.s.~$L$ is a deterministic language.
The reduction will be applicable under mild assumptions which are satisfied by most machine-based models of language acceptors
such as finite automata, B\"uchi automata, context-free grammars/pushdown automata, Petri-nets, register automata, timed automata, etc.
For the language class of the r.h.s.~$M$ it suffices to have closure under inverse homomorphic images,
and for the l.h.s.~$L$ it suffices that we can rename the input symbols
read by transitions in a suitable machine model%
\footnote{The reduction applies also to undecidable instances of the language inclusion problem such as $\CFG \subseteq \DCFG$,
however in this case it is of no use since $\DCFG \subseteq \DCFG$ is known to be undecidable \cite[Theorem 10.7, Point 2]{HopcroftUllman:1979}.}.
Moreover, we argue that such transformation preserves whether $M$ is recognised by a deterministic or an unambiguous machine.

Let $\Sigma$ be a finite alphabet%
\footnote{The construction below can easily be adapted to infinite alphabets of the form $\Sigma \times \mathbb A$,
where $\Sigma$ is finite and $\mathbb A$ is an infinite set of data values \cite{BojanczykKlinLasota:LMCS:2014}.}.
Assume that $L = \lang A \subseteq \Sigma^*$
is recognised\footnote{Languages of infinite words can be handled similarly.}
by a nondeterministic machine $A$ with transitions of the form $\delta = p \goesto {a, \op} q \in \Delta_A$,
where $\op$ is an optional operation that manipulates a local data structure (a stack, queue, a tape of a Turing machine, etc...).
The construction below does not depend on what $\op$ does.
We  assume w.l.o.g.~that $A$ is \emph{total},
i.e., for every control location $p$ and input symbol $a \in \Sigma$
there exists a transition of the form $p \goesto {a, \_} \_ \in \Delta_A$.
Consider a new alphabet $\Sigma' = \Delta_A$,
together with the projection homomorphism $h : \Sigma' \to \Sigma$
that maps a transition $\delta = p \goesto {a, \op} q \in \Delta_A$
to its label $h(\delta) = a \in \Sigma$.
We modify $A$ into a new machine $A'$ by replacing each transition $\delta$ above with
$p \goesto {\delta, \op} q \in \Delta_{A'}$.
Intuitively, $A'$ behaves like $A$ except that it needs to declare which transition $\delta$
it is actually taking in order to read $a = h(\delta)$.
By construction, $A'$ is deterministic (in fact, every transition has a unique label across the entire machine)
and $\lang A = h(\lang {A'})$ is the homomorphic image of $\lang {A'}$.

We need to adapt the machine $B$ recognising $M = \lang B$ in order to preserve inclusion.
For every transition $r \goesto {a, \op} s \in \Delta_B$
and for every $\delta = p \goesto {b, \op'} q \in \Delta_A$ with $b = a$,
we have in $B'$ a transition $r \goesto {\delta, \op} s \in \Delta_{B'}$.
Intuitively, $B'$ behaves like $B$ except that it reads additional information on the transition taken by $A'$.
This information is not actually used by $B'$ during its execution
but it is merely added in order to lift the alphabet from $\Sigma$ to $\Sigma'$.
We have $\lang {B'} = h^{-1}(\lang B)$ is the inverse homomorphic image of $\lang B$.
The following lemma states the correctness of the reduction.
\begin{lemma}
	%
	We have the following equivalence:
		$\lang A \subseteq \lang B
			 \textrm{ if, and only if, }
				\lang {A'} \subseteq \lang {B'}.$
\end{lemma}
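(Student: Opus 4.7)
The plan is to reduce the equivalence to the two identities already noted in the construction: $\lang A = h(\lang{A'})$ and $\lang{B'} = h^{-1}(\lang B)$. Once these are in hand, the lemma follows from elementary set-theoretic properties of images and preimages under the homomorphism $h : \Sigma' \to \Sigma$.

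For the ``only if'' direction, I would take an arbitrary $w' \in \lang{A'}$ and observe that $h(w') \in h(\lang{A'}) = \lang A \subseteq \lang B$, so $w' \in h^{-1}(\lang B) = \lang{B'}$, as required. For the ``if'' direction, I would take an arbitrary $w \in \lang A = h(\lang{A'})$ and pick some $w' \in \lang{A'}$ with $h(w') = w$; by assumption $w' \in \lang{B'} = h^{-1}(\lang B)$, so $w = h(w') \in \lang B$.

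The only substantive sub-claims are the two identities. The identity $\lang{B'} = h^{-1}(\lang B)$ is immediate from the construction of $B'$: a run of $B'$ on $w' \in (\Sigma')^*$ corresponds bijectively to a run of $B$ on $h(w')$, since each transition of $B'$ is simply a transition of $B$ whose label has been enriched with some $\delta \in h^{-1}(a)$, and this extra information is not used by $B'$'s data structure. The identity $\lang A = h(\lang{A'})$ requires a touch more care: a run of $A'$ on $w' = \delta_1 \cdots \delta_n$ is, by construction, a run of $A$ on the word $h(\delta_1) \cdots h(\delta_n) = h(w')$, and conversely every accepting run of $A$ on a word $w$ can be spelled out as the unique sequence of transitions it fires, yielding an accepting run of $A'$ on some $w' \in h^{-1}(w)$.

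There is no real obstacle here; the totality assumption on $A$ is used only to ensure that the determinised $A'$ remains total as well, although the argument as written does not actually need this for the correctness of the equivalence (it matters only for the follow-up claim that $A'$ is deterministic). The whole lemma is essentially a bookkeeping verification that relabeling on the left and pulling back along $h$ on the right preserves inclusion, and I would present it in a handful of lines along the above pattern.
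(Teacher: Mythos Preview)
Your proof is correct and follows essentially the same approach as the paper: both arguments rest on the identities $\lang A = h(\lang{A'})$ and $\lang{B'} = h^{-1}(\lang B)$ together with the Galois connection between image and preimage. The paper phrases this set-theoretically via the generic inclusions $\lang{A'} \subseteq h^{-1}(h(\lang{A'}))$ and $h(h^{-1}(\lang B)) \subseteq \lang B$, whereas you unpack the same reasoning element-wise, which is a purely cosmetic difference.
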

\begin{proof}
	By generic properties of images and inverse images we have the following two inclusions:
	\begin{align}
	\label{eq:prop:h}
		\lang {A'} \subseteq h^{-1}(h(\lang {A'}))
			\quad \textrm{ and } \quad
				h(h^{-1}(\lang B)) \subseteq \lang B.
	\end{align}
	For the ``only if'' direction,
	if $\lang A \subseteq \lang B$ holds,
	then $h^{-1}(\lang A) \subseteq h^{-1}(\lang B)$,
	which, by the definition of $A'$ and $B'$,
	is the same as $h^{-1}(h(\lang A)) \subseteq \lang {B'}$.
	By \eqref{eq:prop:h},
	$\lang {A'} \subseteq h^{-1}(h(A')) \subseteq \lang {B'}$, as required.
	For the ``if'' direction,
	if $\lang {A'} \subseteq \lang {B'}$ holds,
	then also $h(\lang {A'}) \subseteq h( \lang {B'})$ holds.
	Similarly as above,
	we have $\lang A = h(\lang {A'}) \subseteq h(\lang {B'}) = h(h^{-1}(\lang B)) \subseteq \lang B$,
	as required.
\end{proof}

The following lemma states that the reduction above preserves whether $B$ is deterministic or unambiguous.
We mean here the following generic semantic notion of unambiguity:
$B$ is \emph{unambiguous} if for every $w \in \Sigma^*$,
there exists at most one accepting run of $B$ over $w$.
(This notion specialises to the classical notion of unambiguity of finite automata, pushdown automata, Parikh automata, etc.)
\begin{lemma}
	If $B$ is deterministic, then so is $B'$.
	If $B$ is unambiguous, then so is $B'$.
\end{lemma}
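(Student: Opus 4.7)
The plan is to exploit the structural simplicity of the construction of $B'$ from $B$: every transition $r \goesto{\delta, \op} s \in \Delta_{B'}$ arises from exactly one transition $r \goesto{h(\delta), \op} s \in \Delta_B$ (the letter $h(\delta)$ being uniquely determined by $\delta$), and conversely each transition of $B$ gives rise to one transition of $B'$ per letter $\delta \in \Sigma' = \Delta_A$ whose projection matches its label.

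For the deterministic case, I would fix a control location $r$ and an input letter $\delta \in \Sigma'$ and argue that there is at most one transition $r \goesto{\delta, \op} s \in \Delta_{B'}$. Indeed, by construction any such transition comes from a transition $r \goesto{h(\delta), \op} s \in \Delta_B$, of which there is at most one by determinism of $B$, since $h(\delta)$ is a fixed letter of $\Sigma$.

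For the unambiguous case, the natural approach is to set up a bijection between runs of $B'$ on a word $w' = \delta_1 \cdots \delta_n \in (\Sigma')^*$ and runs of $B$ on its projection $w = h(\delta_1) \cdots h(\delta_n) \in \Sigma^*$. Projecting each step $r_{i-1} \goesto{\delta_i, \op_i} r_i$ of a $B'$-run to $r_{i-1} \goesto{h(\delta_i), \op_i} r_i$ yields a $B$-run on $w$; conversely, once $w'$ is fixed, each $B$-run on $w$ lifts uniquely to a $B'$-run on $w'$, since the letter $\delta_i$ already determines how each step must be labelled in $B'$. This correspondence preserves source state, target state, and the data operations $\op_i$, hence it preserves acceptance. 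If $B'$ admitted two distinct accepting runs over some $w'$, they would project to two distinct accepting runs of $B$ over $w$, contradicting unambiguity of $B$.

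The main (mild) obstacle is purely bookkeeping: making explicit that transitions are identified as tuples $(r, a, \op, s)$, so that projection and lifting are genuine inverses of one another. Once this identification is in place, neither part of the lemma requires any property of the data-structure operations $\op$ beyond the fact that they are copied verbatim from $B$ to $B'$, which is why the argument is as generic as the construction itself.
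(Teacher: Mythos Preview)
Your proposal is correct and follows essentially the same approach as the paper: for determinism you pull back a hypothetical pair of $B'$-transitions on $(r,\delta)$ to a pair of $B$-transitions on $(r,h(\delta))$, which is exactly the paper's argument (phrased directly rather than by contradiction). For unambiguity the paper merely says ``an analogous argument shows that also unambiguity is preserved''; your run-level bijection between $B'$-runs on $w'$ and $B$-runs on $h(w')$ is precisely that analogous argument made explicit, so you are in fact filling in what the paper leaves implicit.
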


\begin{proof}
	A transition $p \goesto {\delta, \op} q \in \Delta_{B'}$ in $B'$
	is obtained taking several distinct copies of a transition
	${p \goesto {a, \op} q \in \Delta_B}$ in $B$
	w.r.t.~every possible transition $\delta \in \Delta_A$ over the same input symbol $h(\delta) = a$.
	By way of contradiction, assume that $B$ is deterministic
	and that $B'$ is not deterministic.
	There are two distinct transitions
	$p \goesto {\delta, \op_1} q_1, p \goesto {\delta, \op_2} q_2 \in \Delta_{B'}$
	in $B'$	from the same control location $p$ and input $\delta \in \Sigma'$.
	If $\delta$ is labelled by $h(\delta) = a \in \Sigma$,
	then by construction there are two distinct transitions
	$p \goesto {a, \op_1} q_1, p \goesto {a, \op_2} q_2 \in \Delta_B$
	in $B$ over the same input symbol $a$.
	This contradicts the fact that $B$ was assumed to be deterministic,
	and thus $B'$ must be deterministic as well.
	An analogous argument shows that also unambiguity is preserved.
\end{proof}

\subsubsection{From inclusion to universality}
\label{sec:incl2univ}

Let $\mathcal L$ and $\mathcal M$ be two classes of languages
and let $L \in \mathcal L$ and $M \in \mathcal M$.
A naive approach to decide the inclusion problem
(and the most common)
is to use the following equivalence:
\begin{align}
	\label{eq:incl2ne}
	L \subseteq M
    \quad \textrm{ if, and only if, } \quad
      L \cap (\Sigma^* \setminus M) = \emptyset.
\end{align}
However, this requires complementation of $M$,
which is either expensive (exponential complexity for \NFA)
or just impossible (context-free languages are not closed under complemenetation,
even for the unambiguous subclass \cite{Hibbard:Ullian:UCFL:1966}).
However, we observe the following related reduction
which works much better in our setting:
\begin{align}
	\label{eq:incl2univ}
	L \subseteq M
		\quad \textrm{ if, and only if, } \quad
			(M \cap L) \;\cup\; (\Sigma^* \setminus L) = \Sigma^*.
\end{align}
On the face of it, this looks more complicated than \eqref{eq:incl2ne}
because we now have to perform a complementation (of $L$),
an intersection, a union,
and finally we reduce to the universality problem instead of the nonemptiness,
which is still difficult in general.
However, in our setting there are gains.
First of all, thanks to \cref{sec:det:trick}
we can assume that $L$ is a deterministic language,
and thus complementation is usually available (and cheap).
Second, while universality is still a difficult problem,
it can be easier than inclusion,
e.g., \DCFG inclusion is undecidable while \DCFG universality is decidable (even in \PTIME).

In order to apply \eqref{eq:incl2univ}
we require that $\mathcal L$ is a deterministic class efficiently closed under complement
(i.e., a representation for the complement is constructible in \PTIME)
and that the class $\mathcal M$ is closed under disjoint unions and intersections with languages from $\mathcal L$.
Most deterministic languages classes, such as those recognised by deterministic finte automata,
deterministic context-free grammars,
deterministic Parikh automata,
deterministic register automata, etc.,
satisfy the first requirement%
\footnote{
	A notable exception is deterministic Petri-net languages under coverability semantics,
	since the complement of such languages intuitively requires checking whether some counter is negative,
	which is impossible without zero tests.
	In fact, if both a language and its complement are deterministic Petri-net recognisable under coverability semantics,
	then they are both regular \cite{CzerwinskiLasotaMeyerMuskallaKumarSaivasan:CONCUR:2018}.
}.
The second requirement is satisfied for classes of languages for which the underlying machine models admit a product construction%
\footnote{
	As an example not satisfying this requirement,
	one can take $\mathcal L = \mathcal M$
	to be the class of \DCFL,
	since they are not closed under intersection.
	In fact, while we show in this paper that \UUCFG is decidable,
	the equivalence problem for \UCFG is open.
}.

\subsubsection{Applications}
\label{sec:applications}

In this section we apply the reductions of \cref{sec:det:trick} and \cref{sec:incl2univ}
in order to reduce certain inclusion problems to their respective universality variant.

\begin{theorem}
    \label{thm:NFA:UFA}
    ``$\NFA \subseteq \UFA$'' is in \PTIME.
\end{theorem}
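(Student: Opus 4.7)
The plan is to chain together the two generic reductions just developed and finish with the Stearns--Hunt \PTIME algorithm for universality of \UFA. Starting from an \NFA $A$ recognising $L$ and a \UFA $B$ recognising $M$, I first apply the l.h.s.~determinisation of \cref{sec:det:trick}. This yields in polynomial time a new alphabet $\Sigma' = \Delta_A$, a projection homomorphism $h : \Sigma' \to \Sigma$, a \DFA $A'$ with $\lang A = h(\lang{A'})$, and a machine $B'$ with $\lang{B'} = h^{-1}(\lang B)$. Since $B$ is unambiguous, the second lemma of \cref{sec:det:trick} guarantees that $B'$ is a \UFA, and the correctness lemma gives $\lang A \subseteq \lang B$ iff $\lang{A'} \subseteq \lang{B'}$.

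Now that the l.h.s.~is deterministic, I invoke the reduction \eqref{eq:incl2univ}: $\lang{A'} \subseteq \lang{B'}$ iff $(\lang{B'} \cap \lang{A'}) \cup (\Sigma'^* \setminus \lang{A'}) = \Sigma'^*$. All the required closure properties are available in polynomial time. Complementing the \DFA $A'$ only requires flipping its final states, yielding a \DFA for $\Sigma'^* \setminus \lang{A'}$. The intersection of $B'$ with $A'$ is realised by the standard product construction, which preserves unambiguity because the deterministic $A'$-component of a run is unique. Finally, the two languages $\lang{B'} \cap \lang{A'}$ and $\Sigma'^* \setminus \lang{A'}$ are disjoint by construction, so taking the disjoint union of the two automata (a new initial state branching to each) produces a \UFA $C$ for the whole union. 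The entire construction is polynomial in $\card A + \card B$.

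It remains to test whether $\lang C = \Sigma'^*$, which is a universality question for a \UFA, and hence can be decided in \PTIME by Stearns and Hunt~\cite{StearnsHunt:Unambiguous}. Composing the three polynomial-time steps gives the claimed \PTIME bound.

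There is essentially no hard step: the whole point of \cref{sec:det:trick,sec:incl2univ} is to make an argument of this shape almost mechanical. The only sanity check is that unambiguity really is preserved under (i)~the l.h.s.~determinisation transformation on $B$, (ii)~product with a \DFA, and (iii)~\emph{disjoint} union with a regular language; these are all routine. Once they are in place, the appeal to \cite{StearnsHunt:Unambiguous} finishes the proof.
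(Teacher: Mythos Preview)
Your proof is correct and follows essentially the same approach as the paper: reduce $\NFA \subseteq \UFA$ to $\DFA \subseteq \UFA$ via \cref{sec:det:trick}, then to \UFA universality via \eqref{eq:incl2univ} using the \PTIME closure of \UFA under complement of a \DFA, product with a \DFA, and disjoint union, and finally appeal to Stearns--Hunt. Your write-up simply spells out the intermediate constructions (the homomorphism~$h$, the product, the disjointness of the two pieces) in more detail than the paper does.
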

\noindent
While equivalence and inclusion of \UFA is well-known to be in \PTIME \cite[Corollary 4.7]{StearnsHunt:Unambiguous},
the same complexity for the more general problem ``$\NFA \subseteq \UFA$'' does not seem to have been observed before.
\begin{proof}
    By \cref{sec:det:trick},
    the problem reduces to ``$\DFA \subseteq \UFA$''.
    By \eqref{eq:incl2univ},
    $L \subseteq M$ is equivalent to $N := M \cap L \cup (\Sigma^* \setminus L) = \Sigma^*$.
    Notice that $N$ is effectively \UFA,
    since the \DFA language $L$ can be complemented in \PTIME,
    the intersection $M \cap L$ is also \UFA and computable in quadratic time,
    and the disjoint union of a \UFA and a \DFA is also a \UFA computable in linear time.
    Since the universality problem for unambiguous automata can be solved in \PTIME,
    also ``$\DFA \subseteq \UFA$'', and thus ``$\NFA \subseteq \UFA$'', is in \PTIME as well.
\end{proof}

\begin{theorem}
    \label{thm:NFA:UCFG}
    ``$\NFA \subseteq \UCFG$'' is \PTIME inter-reducible with \UUCFG.
\end{theorem}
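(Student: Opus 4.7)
The plan is to prove two polynomial-time reductions, one in each direction. The easy direction is \UUCFG $\leq$ ``$\NFA \subseteq \UCFG$'': given a \UCFG $G$, simply take $A$ to be the one-state \DFA accepting $\Sigma^*$ and observe $\lang G = \Sigma^*$ iff $\lang A \subseteq \lang G$. So the real work is the other direction.

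For the reverse reduction, I would apply the two generic tools from Section~2.1 and Section~2.2 in sequence. First, invoke the l.h.s.~determinisation of \cref{sec:det:trick} (which applies since \UCFG languages are closed under inverse homomorphic images, via a routine rule-relabelling) to reduce ``$\NFA \subseteq \UCFG$'' to ``$\DFA \subseteq \UCFG$''. Then apply the equivalence \eqref{eq:incl2univ}: given a \DFA $A$ and a \UCFG $G$, we have $\lang A \subseteq \lang G$ iff
\begin{align*}
    (\lang G \cap \lang A) \;\cup\; (\Sigma^* \setminus \lang A) \;=\; \Sigma^*.
\end{align*}
It remains to construct, in polynomial time, a \UCFG $G'$ for the left-hand side and then ask whether $\lang {G'} = \Sigma^*$, which is a \UUCFG instance.

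The constructions preserving unambiguity are standard but need to be checked. The complement $\Sigma^* \setminus \lang A$ is a \DFA obtained by flipping accepting states, hence trivially a \UCFG of linear size. The intersection $\lang G \cap \lang A$ is obtained by a product construction between the \UCFG $G$ and the \DFA $A$: each nonterminal is annotated with a pair of \DFA states recording where $A$ starts and ends while parsing a subtree, and each production is expanded accordingly. Since $A$ is deterministic, this annotation is functionally determined by the yield, so unambiguity of $G$ is preserved and the resulting \UCFG has polynomial size. Finally, the two languages $\lang G \cap \lang A$ and $\Sigma^* \setminus \lang A$ are \emph{disjoint} by construction, so their union is obtained by adding a fresh start symbol $S'$ with two productions $S' \from S_1$ and $S' \from S_2$ pointing to the start symbols of the two component grammars; disjointness guarantees that every word in the union has a unique parse tree, so $G'$ is unambiguous.

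I don't foresee a genuine obstacle: both ingredients (l.h.s.~determinisation and the identity \eqref{eq:incl2univ}) have already been developed in the preceding subsections, and the only thing to verify is that the language class \UCFL is closed under intersection with \DFA languages and under \emph{disjoint} unions in a unambiguity-preserving way, which follows from the product and fresh-start-symbol constructions above. Combining both directions yields \PTIME inter-reducibility, as claimed.
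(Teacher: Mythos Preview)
Your proposal is correct and follows essentially the same approach as the paper: the easy direction is the trivial specialisation to $A$ accepting $\Sigma^*$, and the hard direction chains the l.h.s.\ determinisation of \cref{sec:det:trick} with the inclusion-to-universality identity \eqref{eq:incl2univ}, invoking exactly the three closure properties the paper lists (\DFA complement, \UCFG $\cap$ \DFA, and disjoint union). You in fact spell out the product construction and the fresh-start-symbol union more explicitly than the paper does, which only cites these as folklore.
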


\begin{proof}
    By \cref{sec:det:trick},
    the problem reduces to ``$\DFA \subseteq \UCFG$''.
    Thanks to \cref{sec:incl2univ},
    the latter problem reduces to \UUCFG
    since 1) \DFA languages are efficiently closed under complement (in \PTIME),
    2) \UCFG languages are efficiently closed under intersection with \DFA languages (in \PTIME), and
    3) the disjoint union of a \UCFG language and a \DFA language is a \UCFG language.
    %
    %
    %
    %
    %
    Thus, ``$\NFA \subseteq \UCFG$'' reduces to \UUCFG,
    and since \UUCFG is a special case of the former problem,
    ``$\NFA \subseteq \UCFG$'' is \PTIME inter-reducible with \UUCFG.
\end{proof}

\begin{theorem}
    \label{thm:CFG:UFA}
    ``$\CFG \subseteq \UFA$'' reduces to \UUCFG.
\end{theorem}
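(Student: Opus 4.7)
The plan is to mimic the proof of \cref{thm:NFA:UCFG} by composing the two generic reductions developed in \cref{sec:det:trick} and \cref{sec:incl2univ}. First I would apply the l.h.s.~determinisation of \cref{sec:det:trick} to the pair $(A, B)$ where $A$ is the input \CFG and $B$ is the input \UFA. Regular languages are closed under inverse homomorphism (the product of $B$ with a trivial relabelling finite automaton suffices), and the second lemma of that section guarantees that the unambiguity of $B$ is preserved, so the resulting r.h.s.~$B'$ is still a \UFA. The resulting l.h.s.~$A'$ is a pushdown machine in which every transition bears a unique label and is therefore a \DPDA, that is, it recognises a \DCFG language. This step already reduces ``$\CFG \subseteq \UFA$'' to ``$\DCFG \subseteq \UFA$'' in polynomial time.

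Next I would apply the identity \eqref{eq:incl2univ}, which rewrites $L \subseteq M$ as the universality of $(M \cap L) \cup (\Sigma^* \setminus L)$, and check the three required closure properties: (1) \DCFG languages are efficiently closed under complement, so $\Sigma^* \setminus L$ is a \DCFG language obtained in polynomial time from $A'$; (2) the intersection of the \UFA $M$ with the \DCFG $L$ via a standard product construction yields a \UCFG, because the finite-state factor is unambiguous and the pushdown factor is deterministic, so each accepted word has a unique joint run; and (3) since $M \cap L$ and $\Sigma^* \setminus L$ are disjoint by construction, their union is obtained by a fresh start symbol $S \from S_1 \mid S_2$ where $S_1$ generates $M \cap L$ and $S_2$ generates $\Sigma^* \setminus L$, which is unambiguous because each word can use at most one of the two productions and each sub-grammar is individually unambiguous.

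The resulting \UCFG is universal if and only if the inclusion $\lang A \subseteq \lang B$ holds, so the composition of these polynomial steps gives the desired reduction to \UUCFG. The main point to be careful about is step (2): one must phrase the product at the level of pushdown machines (or grammars in a normal form such as short Greibach) so that the construction is explicitly polynomial and the unambiguity certificate is apparent; everything else is a straightforward application of the reductions already developed in \cref{sec:det:trick} and \cref{sec:incl2univ}.
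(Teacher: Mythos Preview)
Your proposal is correct and follows essentially the same two-step route as the paper: first the l.h.s.~determinisation of \cref{sec:det:trick} to pass from $\CFG \subseteq \UFA$ to $\DCFG \subseteq \UFA$, then the inclusion-to-universality identity \eqref{eq:incl2univ} with the three closure properties. The only difference is in your step~(2): the paper asserts that the intersection of a \UFA with a \DCFG is efficiently \DCFG, whereas you more cautiously claim it is \UCFG; your weaker claim is what the product construction actually gives in polynomial time (determinising the \UFA could be exponential) and it is all that is needed, since the disjoint union in step~(3) of a \UCFG and a \DCFG is still efficiently \UCFG.
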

\begin{proof}    
    By \cref{sec:det:trick}, ``$\CFG \subseteq \UFA$'' reduces to ``$\DCFG \subseteq \UFA$'',
    which in turn reduces to $\UUCFL$ thanks to \cref{sec:incl2univ}
    because 1) $\DCFG$ languages are efficiently closed under complement,
    2) the intersection of a \UFA and a \DCFG language is efficiently \DCFG, and
    3) the disjoint union of two \DCFG languages is efficiently \UCFG. 
    %
    %
    (The latter problem reduces to universality of two disjoint \DCFG languages,
    which in principle may be easier than \UUCFL.)
\end{proof}

\subsection{\UUCFG in \PSPACE}
\label{sec:UUCFL}

In this section we show that \UUCFG is in \PSPACE
by reducing to the zeroness problem for conv-rec sequences.
This complexity upper bound appears also in \cite{Purgal:MSC:2018},
albeit with a more direct argument reducing to systems of monotone polynomial equations.


Let $\Sigma = \set{a, b}$ be a finite alphabet
and let $L \subseteq \Sigma^*$ be a language of finite words over $\Sigma$.
The \emph{counting function} of $L$ is the sequence $f_L : \N \to \N$
s.t.~for every $n \in \N$,
$f_L(n) = \card {L \cap \Sigma^n}$ counts the number of words of length $n$ in $L$.
Given a unambiguous context-free grammar $G = \tuple{\Sigma, N, S, \from}$ in short Greibach normal form,
let $f_X := f_{\lang X} : \N \to \N$ be the counting function
of the language $\lang X$ recognised by the nonterminal $X \in N$.
It is well-known that the $f_X$'s satisfy the following system of equations with convolution:
\begin{align}
	\label{eq:count}
	f_X(n+1) = 
	    \sum_{X \from a Y Z} (f_Y \conv f_Z)(n).
\end{align}
The initial condition is $f_X(0) = 1$ if $X \from \varepsilon$ and $f_X(0) = 0$ otherwise.
In other words, $f_S$,
which is the counting function of the language $\lang G$ recognised by $G$,
is conv-rec.
Unambiguity is used crucially to show that any word $w$ in $\lang {Y \cdot Z}$
factorises uniquely as $w = u\cdot v$ with $u \in \lang Y$ and $v \in \lang Z$,
which allows us to obtain $f_{\lang {Y \cdot Z}} = f_{\lang Y} \conv f_{\lang Z}$.

Clearly, $G$ is universal if, and only if,
$f_S$ is identically equal to the sequence $g(n) = 2^n$.
The latter sequence is conv-rec since it satisfies
$g(n+1) = (g \conv g)(n)$,
with the initial condition $g(0) = 1$.
Thus $G$ is universal if, and only if,
$f(n) = g(n) - f_S(n)$ is identically zero.
%
Since conv-rec sequences are closed under subtraction, 
$f(n)$ is also conv-rec.
By \cref{thm:zeroness}, we can decide zeroness of $f$ in \PSPACE,
and thus the same upper bound holds for \UUCFG.

\begin{theorem}\label{thm:UUCFL:PSPACE}
	The universality problem for unambiguous context-free grammars \UUCFG is in \PSPACE.
\end{theorem}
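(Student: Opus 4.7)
The plan is to cast \UUCFG as a zeroness question for a conv-rec sequence and then invoke \cref{thm:zeroness}; since that theorem already delivers a \PSPACE bound, what remains is to witness the reduction in polynomial time.

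The first step is to convert the input \UCFG $G$ into short Greibach normal form in polynomial time while preserving unambiguity, and to introduce the counting sequences $f_X(n) = \card{\lang X \cap \Sigma^n}$ for each nonterminal $X$. The crucial use of unambiguity is the following: whenever $X \from a Y Z$ is a production and $w = a \cdot u$ with $u \in \lang{YZ}$, the factorisation $u = u_Y \cdot u_Z$ with $u_Y \in \lang Y$ and $u_Z \in \lang Z$ must be unique, for otherwise $w$ would admit two distinct derivation trees rooted at $X$. Hence $f_{\lang{YZ}} = f_Y \conv f_Z$, and summing over productions yields the convolution recurrences \eqref{eq:count}, exhibiting $(f_X)_X$ as a conv-rec system with $f_S$ distinguished.

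The second step is to express universality as zeroness. The target sequence $g(n) = \card\Sigma^n$ is itself conv-rec, since it satisfies $\sigma g = \card\Sigma \conv g$ where $\card\Sigma$ is the constant sequence $\card\Sigma, 0, 0, \dots$ and convolution with a constant is just scalar multiplication. Universality of $G$ is equivalent to $f_S = g$, i.e., to $h := g - f_S$ being identically zero; since conv-rec sequences are closed under subtraction (the convolution polynomials in \eqref{eq:convrec} may carry negative rational coefficients), $h$ admits a conv-rec presentation whose size is polynomial in $\card G$.

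The final step is to feed this presentation into \cref{thm:zeroness}, which decides $h \equiv 0$ in \PSPACE. I do not foresee a serious obstacle here: the only delicate point is the equality $f_{\lang{YZ}} = f_Y \conv f_Z$ invoked above, which would degrade to an inequality without unambiguity and would collapse the whole reduction, so it is essential that the Greibach normalisation step preserves unambiguity of the grammar.
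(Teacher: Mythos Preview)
Your proposal is correct and follows the same route as the paper: extract the conv-rec system \eqref{eq:count} for the counting functions $f_X$, form the difference $h = g - f_S$ with $g(n) = \card\Sigma^n$, and invoke \cref{thm:zeroness}. Your linear recurrence $\sigma g = \card\Sigma \cdot g$ is in fact cleaner than the paper's choice $\sigma g = g \conv g$, and you are right to flag that the Greibach normalisation must preserve unambiguity, a point the paper leaves implicit.
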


\section{\SQRTSUM-hardness of coin-flip measure}
\label{sec:coin-flip}

In this section we show that a quantitative generalisation of \UUCFG is hard
for a well-known problem in numerical computing.
Let $\Sigma_n = \set{a_1, \dots, a_n}$ be a finite alphabet of $n$ distinct letters.
Consider the following random process to generate a finite word in $\Sigma^*$.
%
At step $k$ we select one option $a_k \in \Sigma_\varepsilon = \Sigma_n \cup \set \varepsilon$
uniformly at random.
If $a_k = \varepsilon$,
then we terminate and we produce in output $a_0 \cdots a_{k-1}$.
Otherwise, we continue to the next step $k+1$.
It is easy to see that the probability to generate a word depends only on its length
and equals $\mu_\text{coin}(w) = \left(\frac 1 {\card \Sigma+1}\right)^{\card w +1}$.
The \emph{coin-flip measure} of a language of finite words $L\subseteq \Sigma^*$ is
%
	$\mu_\text{coin}(L) = \sum_{w \in L} \mu_\text{coin}(w).$
%
Clearly, $0 \leq \mu_\text{coin}(L) \leq 1$,
$\mu_\text{coin}(L) = 0$ iff $L = \emptyset$,
and $\mu_\text{coin}(L) = 1$ iff $L = \Sigma^*$.

Since $\mu_\text{coin}(w)$ depends just on $\abs w$,
we can write $\mu_\text{coin}(L) = \sum_{k=0}^\infty f_L(k) \cdot \left(\frac 1 {n+1}\right)^{k+1}$,
where $f_L(k) = \abs {L \cap \Sigma^k}$ is the counting function of $L$.
In other words, one possible way of computing the coin-flip measure
it by evaluating the generating function $g_{f_L}(x)$
at $x = \frac 1 {n+1}$ (modulo a correction factor):
$\mu_\text{coin}(L) = \frac 1 {n+1} \cdot g_{f_L}\left(\frac 1 {n+1}\right)$.
Consequently, the coin-flip measure of a regular language is rational,
and that of an unambiguous context-free language is algebraic
(following from the analogous, and more general,
facts about the respective generating functions
\cite{ChomskySchutzenberger:Algebraic:1963}).
Let $L, M \subseteq \Sigma_n^*$ be two languages with unambiguous concatenation $L \cdot M$.
Then
\begin{align}
    \label{eq:mu:LM}
    \mu_\text{coin}(L \cdot M) = (n+1) \cdot \mu_\text{coin}(L) \cdot \mu (M).
\end{align}

The \emph{coin-flip comparison problem} asks,
given a language $L \subseteq \Sigma^*$,
a rational threshold $0 \leq \varepsilon \leq 1$ encoded in binary,
and a comparison operator ${\sim} \in \set{\leq, <, >, \geq}$,
whether $\mu_\text{coin}(L) \sim \varepsilon$ holds.
The universality problem for $L$ is the special case
when $\varepsilon = 1$.
%
%
We now relate the coin-flip comparison problem to an open problem in numerical computing.
The \emph{\SQRTSUM problem} asks,
given $d_0, \dots, d_n \in \N$ encoded in binary
and a comparison operator ${\sim} \in \set{{\leq}, {<}, {>}, {\geq}}$,
whether
\footnote{
	In fact, the problem reduces to the case when ${\sim} = {\geq}$ is fixed.
	By doing binary search in the interval $\set{0, 1, \dots, n \cdot d}$,
	with only $O(\log (n\cdot d))$ queries to \eqref{eq:SQRTSUM}
	we can find the unique $\hat d_0 \in \N$
	\st $\hat d_0 \leq \sum_{i=1}^n \sqrt {d_i} \leq \hat d_0 + 1$.
	We can then solve $\sum_{i=1}^n \sqrt {d_i} \leq d_0$
	by checking $d_0 \leq \hat d_0 + 1$,
	and similarly for the other comparison operators.
}:
\begin{align}
	\label{eq:SQRTSUM}
	\sum_{i=1}^n \sqrt {d_i} \sim d_0.
\end{align}
This problem can be shown to be in \PSPACE
by deciding the existential formula
$\exists x_1, \dots, x_n \st x_1^2 = d_1 \land \cdots \land x_n^2 = d_n \land x_1 + \cdots + x_n \sim d_0$
over the reals \cite{Canny:1988:STOC:1988}.
It is a long-standing open problem in the theory of numerical computation
whether \SQRTSUM is in \NP,
or whether it is \NP-hard
\cite{AllenderBurgisserKjeldgaard-PedersenMiltersen:JC:2009,EtessamiYannakakis:JACM:2009}.

\begin{theorem}
    \label{thm:SQRTSUM-hardness}
    The coin-flip measure comparison problem is \SQRTSUM-hard for \UCFG.
\end{theorem}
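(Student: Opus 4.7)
The plan is to give a polynomial-time reduction from SQRTSUM. Given an instance $(d_0, d_1, \ldots, d_n; \sim)$, I will build a UCFG $G$ and a rational threshold $\varepsilon$ so that $\mu_{\text{coin}}(L(G)) \sim \varepsilon$ is equivalent to $\sum_{i=1}^n \sqrt{d_i} \sim d_0$. The three enabling facts are (a) the Chomsky--Sch\"utzenberger theorem, giving algebraic generating functions and hence algebraic coin-flip measures for UCFG; (b) the multiplicativity of measure under unambiguous concatenation, equation~\eqref{eq:mu:LM}; and (c) the additivity of measure under disjoint union of disjoint languages.

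The reduction has three layers. First, for each $d_i$ I construct a ``building block'' UCFG $G_i$ over a common fixed alphabet $\Sigma$ with $\mu_{\text{coin}}(L(G_i)) = \alpha_i + \beta_i \sqrt{d_i}$ for explicit positive rationals $\alpha_i, \beta_i$. My candidate is a modified Dyck grammar $S \to P \mid a S b S$, where $P$ is an auxiliary UCFG generating a finite set of short words, none starting with $a$, so that the two productions produce disjoint languages. This yields the quadratic relation $g_S = g_P + x^2 g_S^2$, whose power-series solution is $g_S(x) = (1 - \sqrt{1 - 4 x^2 g_P(x)})/(2 x^2)$. Evaluating at the coin-flip point $x^* = 1/(|\Sigma|+1)$, the measure becomes $\mu_{\text{coin}}(L(G_i)) = \frac{1}{2x^*}\bigl(1 - \sqrt{R_i}\bigr)$ with $R_i = 1 - 4 x^{*2} g_P(x^*)$ a rational I control through $P$. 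By designing $L(P)$ so that $R_i = d_i/K_i^2$ for a known rational $K_i$, I obtain the target form. Second, I combine the $G_i$'s via disjoint union, with fresh leading markers of tailored lengths (the marker on $G_i$ scales the contribution by a power of $1/(|\Sigma|+1)$ via equation~\eqref{eq:mu:LM}); the lengths are chosen so that the final measure equals $A + B \sum_i \sqrt{d_i}$ for known rationals $A, B$. Third, I set $\varepsilon = A + B \cdot d_0$; the comparison $\mu_{\text{coin}}(L(G)) \sim \varepsilon$ then matches $\sum_i \sqrt{d_i} \sim d_0$, possibly with $\sim$ flipped when $B < 0$.

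The main obstacle is fitting everything into polynomial bit-size, as each $d_i$ is given in binary. A naive Dyck grammar with $d_i$ bracket types, or $d_i$ copies of a production, would be exponentially too large. The key observation is that $g_P(x^*) = \sum_j c_j/(|\Sigma|+1)^j$ with $c_j = |L(P) \cap \Sigma^j|$ is literally a base-$(|\Sigma|+1)$ expansion, so any rational target whose expansion has $O(\log d_i)$ nonzero digits is realized by adding $O(\log d_i)$ explicit short words to $L(P)$, well within the polynomial budget. A secondary worry is preserving unambiguity of the Dyck-like factorization, which I would handle exactly as for the classical Dyck language: words of $L(P)$ are made pairwise distinct and disjoint from $a \Sigma^*$, so every derivation tree in $S \to P \mid a S b S$ is forced, and the overall disjoint union in the last step is unambiguous thanks to the distinct markers.
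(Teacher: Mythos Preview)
Your plan is structurally the same as the paper's: a Catalan/Dyck-type production $X \to C \mid (\text{prefix})\cdot X\cdot(\text{separator})\cdot X$ yields a quadratic equation for $\mu(X)$ whose least root contains $\sqrt{d_i}$; the constant $C$ is tuned to set the radicand; and the $n$ copies are merged by a disjoint union with distinguishing leading symbols so that the total measure is affine in $\sum_i\sqrt{d_i}$. The paper uses $X_i \to C_i \mid A\cdot X_i\cdot a_n\cdot X_i$ over $\Sigma_n=\{a_1,\dots,a_n\}$, combines via $X_0\to a_1X_1\mid\cdots\mid a_nX_n$, and sets $\varepsilon=\frac{1}{n+1}(n-d_0/d)$. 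So the architecture matches; what differs is that two steps you treat as routine are exactly where the paper does the real work.

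\emph{Unambiguity.} Requiring only $L(P)\cap a\Sigma^*=\emptyset$ does \emph{not} make $S\to P\mid aSbS$ unambiguous. With $P=\{\varepsilon,b\}$ the word $abb$ has two parses, $a\cdot\varepsilon\cdot b\cdot b$ and $a\cdot b\cdot b\cdot\varepsilon$. The classical Dyck argument you invoke relies on every $a$ and $b$ in the word coming from the explicit production, which fails as soon as $L(P)$ contains a bracket letter. The paper fixes this by confining both $L(C_i)$ and $L(A)$ to the sub-alphabet $\Sigma_{n-1}$ (no $a_n$) and using $a_n$ as the separator; the parse is then recovered from the \emph{last} occurrence of $a_n$, and one shows $L(X_i)=\bigcup_s L(A)^s\cdot L(C_i)\cdot(a_n\cdot L(C_i))^s$ unambiguously.

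\emph{Polynomial size of the constant.} Once $L(P)$ is forced into a strict sub-alphabet $\Sigma_m$ with $m<|\Sigma|$ (as it must be, by the previous point), your digit argument breaks: the number of available words of length $j$ over $\Sigma_m$ is $m^j$, which for small $j$ is below the maximal base-$(|\Sigma|{+}1)$ digit $|\Sigma|$, so an arbitrary digit string is not realisable as a word-count vector. Moreover one must first arrange that the target denominator divides a power of $|\Sigma|{+}1$ (the paper pads the instance so that $d=(n+1)^{2h}$), otherwise the expansion is merely eventually periodic rather than finite. Handling both issues is precisely the paper's Representation Lemma: saturate all short lengths with $\Sigma_m^{\le k-1}$ for a carefully chosen $k=O(n\log q)$, then show the residual value is small enough that every remaining digit fits within the $m^j$ budget. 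Your ``add $O(\log d_i)$ explicit short words'' hides exactly this difficulty.
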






In the rest of the section we prove the theorem above.
Let $d_0, \dots, d_n \in \N$ be the input to \SQRTSUM.
We assume w.l.o.g.~that $n$ is an odd number $\geq 3$.
We construct a rational constant $\varepsilon \in \Q$
and a \UCFG ${G = \tuple{\Sigma_n, N, X_0, \from}}$
over a $n$-ary alphabet $\Sigma_n = \set {a_1, \dots, a_n}$
and nonterminals $N$ containing $\set{X_0, \dots, X_n, C_1, \dots, C_n, A}$
plus some auxiliary nonterminals
(omitted for readability)
\st $\mu_\text{coin}(\lang G) \sim \varepsilon$ if, and only if, \eqref{eq:SQRTSUM} holds.
The principal productions of the grammar are:
\begin{align*}
	X_0 &\from a_1 \cdot X_1 \sep \cdots \sep a_n \cdot X_n, \\
    X_1 &\from C_1 \sep A \cdot X_1 \cdot a_n \cdot X_1, \\
        &\vdots \\
    X_n &\from C_n \sep A \cdot X_n \cdot a_n \cdot X_n.
\end{align*}
The remaining nonterminals $C_i$'s and $A$ will generate certain regular languages to be determined below.
Let $d = \max_{i = 1}^n d_i$.
For every $1 \leq i \leq n$,
let $x_i = 1 - \frac {\sqrt {d_i}} d$. 
It is easy to check that $x_i$
is the least non-negative solution of
\begin{align}
	\label{eq:xi}
	x_i	&= c_i + a \cdot x_i^2
	    \quad \text{where } c_i := \frac 1 2 \left(1 - \frac {d_i} {d^2}\right)
	    \text{ and } a := \frac 1 2 .
\end{align}
In the following, we write $\mu(X)$ for a non-terminal $X \in N$
as a shorthand for $\mu_\text{coin}(\lang X)$.
Since $\mu(a_1) = \cdots = \mu(a_n) = \frac 1 {(n+1)^2}$,
by \eqref{eq:mu:LM} we have
\begin{align}
    \label{eq:Xi}
    \mu(X_0) = \frac 1 {n+1} (\mu(X_1) + \cdots + \mu(X_n))
    \text{ and }
    \mu(X_i) = \mu(C_i) + (n+1) \cdot \mu(A) \cdot \mu(X_i)^2, i \in \set{1, \dots, n}.
\end{align}
We aim at obtaining $\mu(X_i) = x_i$.
By comparing \eqref{eq:Xi} with \eqref{eq:xi}
we deduce that the nonterminals $C_i$ and $A$
must generate languages of measure $\mu(C_i) = c_i$, resp., $\mu(A) = \frac a {n+1} = \frac 1 {2(n+1)}$.
Since the measures $a, c_i$ are rational,
it suffices to find regular languages $\lang A, \lang {C_i}$.
The main difficulty is to define these language as to ensure that $G$ is unambiguous
and of polynomial size.
In order to achieve this we further require that
\begin{inparaenum}[1)]
    \item $\lang A \subseteq \Sigma_{n-1}$ is a finite set of words of length $1$ (single letters)
    not containing letter $a_n$, and
    \item $\lang {C_i} \subseteq \Sigma_{n-1}^*$ is a set of words not containing letter $a_n$.
\end{inparaenum}
%

We first define $\lang A$.
Let
\begin{align*}
    A \from a_1 \sep \cdots \sep a_{\frac {n+1} 2}.    
\end{align*}
In order to avoid letter $a_n$,
we require $\frac {(n+1)} 2 \leq n-1$.
The latter condition is satisfied since we assumed $n \geq 3$.
Thus, $\lang A \subseteq \Sigma_{n-1}$ is finite,
contains only words of length $1$,
and has measure $\mu (A) = \frac {n+1} 2 \cdot \frac 1 {(n+1)^2} = \frac a {n+1}$, as required.

The definition of $\lang {C_i}$ of measure $\mu (C_i) = c_i$ is more involved.
In general, it is easy to construct a regular expression (or a finite automaton) recognising a language
of measure equal to a given rational number.
However, we have two constraints to respect:
\begin{inparaenum}[1)]
    \item we can use only letters from $\Sigma_{n-1}$, and
    \item the regular expression must have size polynomial in the bit encoding of $c_i$.
\end{inparaenum}
The first constraint entails an upper bound $\mu(\Sigma_{n-1}^*) = \frac 1 2$
on the maximal measure that $\lang {C_i}$ can have.
However, this is not a problem in our case since $c_i < \frac 1 2$ by definition.
The second constraint is handled by the following lemma.
%
\begin{restatable}[Representation lemma]{lemma}{lemmaRepresentation}
	\label{lem:rep}
	Let $n + 1 \in \N$ with $n \geq 2$ be a base,
	let $m \in \N$ \st $1 \leq m \leq n$,
	and let $c \in \R$ with $0 \leq c \leq \frac 1 {n - m + 1}$ be a target rational measure
	written in reduced form as
		$c = \frac p q, \textrm{ with } p, q \in \N,\ p \leq q$.
	%
	There exists an unambiguous regular expression $e$
	using only letters from $\Sigma_m \subseteq \Sigma_n$ recognising a language of measure
    $\mu(\lang e) = c$.
	Moreover, if there exists $\ell \in \N$ \st $q \divides (n+1)^\ell$,
	then $e$ can be taken of size polynomial in $\log q$, $n$, and $\ell$.
\end{restatable}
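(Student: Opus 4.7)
My plan is to construct $e$ explicitly by decomposing $c$ as a sum of measures of length-stratified sublanguages of $\Sigma_m^*$; for brevity I focus on the polynomial-size case $q \mid (n+1)^\ell$, where the construction is finite (the general existence for arbitrary rational $c$ follows by a similar approach combined with Kleene stars to encode periodic tails). First, handle the boundary $c = 1/(n-m+1) = \mu_\text{coin}(\Sigma_m^*)$ by setting $e := (a_1 \mid \cdots \mid a_m)^*$, of constant size. Otherwise $c < 1/(n-m+1)$ strictly, and I aim to write
\begin{align*}
    c = \sum_{j=0}^{L-1} \frac{\alpha_j}{(n+1)^{j+1}}, \qquad 0 \leq \alpha_j \leq m^j,
\end{align*}
for some depth $L$ polynomial in $n$, $\log q$, and $\ell$. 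Since $q \mid (n+1)^\ell$, the quantity $N := c \cdot (n+1)^L$ is a non-negative integer for every $L \geq \ell$, and the desired decomposition is equivalent to the positional representation $N = \sum_{j=0}^{L-1} \alpha_j (n+1)^{L-1-j}$ subject to the digit bound $\alpha_j \leq m^j$.

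To obtain the $\alpha_j$'s, start from the standard base-$(n+1)$ expansion of $N$ (whose digits lie in $\{0, \ldots, n\}$) and apply a greedy carry procedure: for $j = 0, 1, \ldots, L-2$ in order, whenever $\alpha_j > m^j$, add $(n+1)(\alpha_j - m^j)$ to $\alpha_{j+1}$ and then set $\alpha_j := m^j$. Each step preserves the weighted sum $\sum_j \alpha_j (n+1)^{L-1-j}$, so the final tuple still represents $N$; it leaves $\alpha_j \leq m^j$ for all $j \leq L-2$, with $\alpha_{L-1}$ absorbing the residue. The procedure succeeds without overflow at $\alpha_{L-1}$ exactly when $N$ does not exceed the total capacity $\sum_{j=0}^{L-1} m^j (n+1)^{L-1-j} = ((n+1)^L - m^L)/(n-m+1)$, equivalently $(m/(n+1))^L \leq 1 - c(n-m+1)$. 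Because $c = p/q$ with $c < 1/(n-m+1)$ is rational, the minimal gap satisfies $1/(n-m+1) - c \geq 1/(q(n-m+1))$, so it suffices to take $L$ of order $\log q / \log((n+1)/m)$, which is polynomial in $n$ and $\log q$ since $\log((n+1)/m) = \Omega(1/n)$.

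Given the $\alpha_j$'s, for each $j$ with $\alpha_j > 0$ I build an unambiguous regex $e_j$ over $\Sigma_m$ recognising exactly $\alpha_j$ words of length $j$, via the standard ``first $\alpha_j$ words in lexicographic order'' decomposition. Writing $\alpha_j = b_{j-1} m^{j-1} + \cdots + b_0$ in base $m$, take $e_j$ as a disjoint union of at most $j$ terms, where the $i$-th term forces the first $j-1-i$ letters to spell the prefix $a_{b_{j-1}} \cdots a_{b_{i+1}}$, the next letter to lie in $\{a_1, \ldots, a_{b_i}\}$, and the remaining $i$ letters to range freely over $\Sigma_m$; this yields a regex of size polynomial in $j$ and $m$. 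The final regex $e := e_0 \mid e_1 \mid \cdots \mid e_{L-1}$ is unambiguous by length stratification, has total size polynomial in $L$, $m$, and $n$ (hence polynomial in $\log q$, $n$, and $\ell$), and satisfies $\mu_\text{coin}(\lang e) = \sum_j \alpha_j/(n+1)^{j+1} = c$ by construction. The main obstacle will be formalising the correctness of the greedy carry: one must match the rationality-driven gap $1/(n-m+1) - c = \Omega(1/q)$ against the geometric decay $(m/(n+1))^L$ of the capacity surplus, and verify that no step of the procedure violates the digit bound at the terminal position $L-1$ for the chosen $L$.
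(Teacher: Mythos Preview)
Your high-level plan---write $c$ as a finite length-stratified sum $\sum_j \alpha_j/(n+1)^{j+1}$ with $\alpha_j\le m^j$ and realise each stratum by a lexicographic-interval regex---matches the paper's, and your $e_j$ is essentially the paper's building block $e_{h,k}$. The gap is in the heart of the argument, the greedy carry.

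You assert that the carry ``succeeds without overflow at $\alpha_{L-1}$ exactly when $N$ does not exceed the total capacity $\sum_{j} m^j (n+1)^{L-1-j}$''. This is false. Take $m=2$, $n=3$, $c=3/16$, $L=2$: then $N=3$ and the capacity is $1\cdot 4 + 2\cdot 1 = 6 \ge N$, yet the base-$4$ expansion is $(\alpha_0,\alpha_1)=(0,3)$; since $\alpha_0=0\le m^0$ your loop does nothing, and $\alpha_1=3>m^1=2$ overflows. In fact \emph{no} decomposition with $L=2$ exists here, so ``$N\le$ capacity'' is not sufficient even for existence of the $\alpha_j$'s, and your bound on $L$---which is derived through this equivalence---is unjustified. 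The structural reason is that your carry only moves value \emph{forward}: if the base-$(n{+}1)$ expansion happens to have a small digit at some early position $j$, that position stays under-filled and its spare capacity $m^j-\alpha_j$ is simply lost, while later positions overflow. The paper avoids this with a different allocation: it first \emph{saturates} positions $0,\dots,k-1$ (sets $\alpha_j=m^j$) for the largest $k$ with $\mu(\Sigma_m^{\le k-1})\le c$, then observes that the residue is below $1/(n+1)^{k+1}$, so its ordinary base-$(n+1)$ digits are at most $n$ and fit into positions $k+1,k+2,\dots$ once $m^{k}\ge n$. This fill-from-the-top step is precisely what a forward-only carry from the standard expansion cannot reproduce; to repair your argument you would need either to adopt it, or to prove directly that for your particular choice of $L$ the greedy actually reaches the saturated profile at all early positions---which is a different (and harder) claim than the capacity inequality you wrote.
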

We apply Lemma~\ref{lem:rep} with $m := n-1$ and $c := c_i$
and obtain an unambiguous regular expression $e$
recognising a language $\lang e \subseteq \Sigma_{n-1}^*$ of measure $c_i$.
We now argue that $e$ can be taken of polynomial size.
In order to achieve this, we assume w.l.o.g.~that $d = (n+1)^{2h}$ for some $h$.
(This can be ensured by adding a new integer 
$d_{n+1} = (n+2)^{2h}$ for some $h$ large enough,
and by replacing $d_0$ with $d_0 + \sqrt{d_{n+1}} = d_0 + (n+2)^h$.)
Consequently, $c_i = \frac {\frac d 2 (d^2 - d_i)} {(n+1)^{4h}} = \frac p q$
with $p, q \in \N$ relatively prime and $q \divides (n+1)^{4h}$,
and thus $e$ has polynomial size by taking $\ell = 4h$ in the lemma.
%
%
The set of polynomially many production rules for nonterminal $C_i$
is derived immediately from the regular expression $e$
by adding some auxiliary nonterminals.
Moreover, since $e$ is unambiguous, the same applies to the rules for $C_i$.
This completes the description of the grammar $G$.

\begin{lemma}
    The grammar $G$ is unambiguous.
\end{lemma}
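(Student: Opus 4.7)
The plan is to check unambiguity nonterminal by nonterminal, starting from the auxiliary ones and working up. The productions for $A$ expand to distinct single letters, so $A$ is trivially unambiguous. The productions for each $C_i$ come from an unambiguous regular expression by the representation lemma, and a standard textbook construction turns an unambiguous regexp into an unambiguous set of CFG rules (with fresh auxiliary nonterminals), so $C_i$ is unambiguous as well. For $X_0$, the $n$ alternatives $a_i \cdot X_i$ all begin with pairwise distinct terminals, so the first letter of any word in $\lang{X_0}$ determines which rule was used, and from there unambiguity of $X_0$ reduces to unambiguity of the corresponding $X_i$.

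The core of the argument is therefore the unambiguity of each $X_i$ for $1 \le i \le n$. First, the two alternatives $C_i$ and $A \cdot X_i \cdot a_n \cdot X_i$ cannot both derive the same word: any word in $\lang{C_i}$ lies in $\Sigma_{n-1}^*$ and hence contains no occurrence of $a_n$, while any word generated by $A \cdot X_i \cdot a_n \cdot X_i$ contains at least one $a_n$ (the explicit one in the middle). So the outermost choice of rule for $X_i$ is forced by the presence or absence of $a_n$, and if the word lies in $\lang{C_i}$ its derivation is unique by unambiguity of $C_i$.

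For a word $w = \alpha\,u\,a_n\,v$ generated by $A \cdot X_i \cdot a_n \cdot X_i$, the first letter $\alpha$ of $w$ is the single letter derived from $A$, and since $A$ is unambiguous this choice is forced. The remaining work is to show that the factorisation of the suffix into $u\cdot a_n\cdot v$ with $u,v\in\lang{X_i}$ is unique; this is the main obstacle and it is handled by a Dyck-style weighting. Assign weight $+1$ to every letter in $\lang A \subseteq \{a_1,\dots,a_{(n+1)/2}\}$, weight $-1$ to $a_n$, and weight $0$ to every other letter. By induction on derivations, every $w\in\lang{X_i}$ has total weight $0$ and every prefix of $w$ has weight $\geq 0$: this is immediate for $w\in\lang{C_i}\subseteq\Sigma_{n-1}^*$, and for $w=\alpha u a_n v$ the weight climbs to $1$ after $\alpha$, stays $\geq 1$ throughout $u$ and returns to $1$ at its end, drops to $0$ at $a_n$, and then stays $\geq 0$ through $v$, ending at $0$.

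With this invariant in hand, given the suffix $u\, a_n\, v$ of $w$, the position of the splitting $a_n$ is uniquely determined as the first position where the cumulative weight becomes strictly negative: throughout $u$ the weight is $\geq 0$ and ends at $0$, the next letter $a_n$ sends it to $-1$, and thereafter through $v$ the weight is $\geq -1$. Hence $u$ is uniquely determined as the maximal prefix of the suffix on which the weight stays non-negative, $a_n$ is the next letter, and $v$ is the rest. Unambiguity of the factorisation then lifts to unambiguity of the derivation by induction on $|w|$, using the inductively-established unambiguity of $X_i$ on the strictly shorter words $u$ and $v$. This establishes unambiguity of every $X_i$ and, combining with the discussion of $X_0$, of the whole grammar $G$.
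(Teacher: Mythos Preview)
Your Dyck-style weighting fails at the base case. You assign weight $+1$ to each letter of $\lang A=\{a_1,\dots,a_{(n+1)/2}\}$, weight $-1$ to $a_n$, and weight $0$ elsewhere, and then claim that the invariant ``total weight $0$, all prefix weights $\ge 0$'' is immediate for $w\in\lang{C_i}\subseteq\Sigma_{n-1}^*$. It is not: $\Sigma_{n-1}=\{a_1,\dots,a_{n-1}\}$ \emph{contains} all of $\lang A$ (recall $(n+1)/2\le n-1$ once $n\ge 3$), so words of $\lang{C_i}$ may freely use letters of weight $+1$. The concrete $\lang{C_i}$ built by the representation lemma even contains the block $\Sigma_{n-1}^{\le k-1}$, so whenever $k\ge 2$ already the single-letter word $a_1\in\lang{C_i}$ has weight $+1\neq 0$. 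Once the invariant breaks for $C_i$ it breaks for $X_i$, and your rule ``the top-level $a_n$ is where the running weight of the suffix first goes negative'' no longer pins down anything: the $C_i$-blocks inside $u$ can pump the running weight up, so it may never turn negative, or may do so at the wrong $a_n$. For instance with $n=3$ and $\varepsilon,a_1\in\lang{C_i}$, on the word $a_1a_1a_3a_1a_3$ the running weight of the suffix $a_1a_3a_1a_3$ is $1,0,1,0$ and never goes below $0$, so your criterion does not select either $a_3$.

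The paper's argument is different in that it never tries to separate $\lang A$-letters from $\lang{C_i}$-letters inside $\Sigma_{n-1}$; it uses only that $a_n$ is absent from both. Rather than locating the top-level $a_n$ via a prefix condition from the left, it looks at the \emph{last} occurrence of $a_n$ in $w$, argues that the suffix after it must lie in $\lang{C_i}$ and the prefix before it in $\lang{A\cdot X_i}$, and iterates to the normal form $w\in\lang{A^s\cdot C_i\cdot(a_n C_i)^s}$ where $s$ is the number of $a_n$'s; unambiguity is then read off from this fixed-shape concatenation (the $a_n$'s delimit the $C_i$-blocks and $\lang A$ has fixed length $1$). Your approach would need a genuine way to tell an $A$-contributed letter from a $C_i$-contributed letter within $\Sigma_{n-1}$, and the construction provides none.
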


\begin{proof}
    Since $\lang {G} = \lang {X_0}$
    is the union of languages $\lang{a_1 \cdot X_1}, \dots, \lang{a_n \cdot X_n}$,
    and the latter are disjoint,
    it suffices to show that the $\lang{X_i}$'s are recognised unambiguously.
    Let $w \in \lang {X_i}$.
    If $w$ does not contain any $a_n$, then necessarily $w \in \lang {C_i}$.
    Otherwise, let $w = u a_n v$ where $v$ does not contain any $a_n$.
    Thus $v \in \lang {C_i}$ and $u \in \lang {A \cdot X_i}$.
    Since $A$ produces only words of fixed length,
    $u = xw'$ unambiguously with $x \in A$ and $w' \in \lang {X_i}$.
    This argument shows that for any $w \in \lang{X_i}$
    if we let $s$ be the number of $a_n$ in $w$,
    then $w \in \lang{A^s \cdot C_i \cdot (a_n \cdot C_i)^s}$.
    Since $A$ produces words of fixed length and $C_i$ does not produce any word containing $a_n$,
    the latter concatenation is unambiguous and thus $w$ is produced unambiguously by $X_i$.
    %
\end{proof}

Let $\varepsilon := \frac 1 {n+1} \left(n - \frac {d_0} d \right)$.
The following lemma states the correctness of the reduction.
\begin{lemma}
    We have $\mu(\lang G) \sim \varepsilon$ if, and only if, \eqref{eq:SQRTSUM} holds.
\end{lemma}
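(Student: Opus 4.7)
The plan is to directly compute $\mu(\lang G)$ in closed form in terms of the $\sqrt{d_i}$'s using the equations \eqref{eq:Xi}, and then match the comparison against $\varepsilon$ with the \SQRTSUM comparison.

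First, I would verify that the auxiliary nonterminals have their intended measures. For $A$, the $(n+1)/2$ single-letter alternatives are pairwise disjoint, each of measure $\frac{1}{(n+1)^2}$, so $\mu(A) = \frac{1}{2(n+1)}$; for $C_i$, the Representation Lemma (applied with $m = n-1$ and $c = c_i$) gives $\mu(C_i) = c_i$ with a polynomial-size unambiguous regular expression. Since $G$ is unambiguous (by the preceding lemma), the coin-flip measure is multiplicative over concatenation and additive over disjoint unions, so the equations \eqref{eq:Xi} hold. Substituting $\mu(A)$ and $\mu(C_i)$ into the second of these and using $(n+1)\mu(A) = \tfrac12 = a$ makes it coincide with \eqref{eq:xi}: $\mu(X_i) = c_i + \tfrac12 \mu(X_i)^2$.

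Next, I would solve this quadratic and identify which root equals $\mu(X_i)$. Using $1 - 2 c_i = d_i/d^2$, the two roots are $1 \pm \sqrt{d_i}/d$. Because $\mu(X_i) \in [0,1]$ is a probability while $1 + \sqrt{d_i}/d \geq 1$ (with equality iff $d_i = 0$, when both roots coincide), we must have $\mu(X_i) = 1 - \sqrt{d_i}/d = x_i$. Plugging this into the first equation of \eqref{eq:Xi},
\begin{align*}
    \mu(\lang G) \;=\; \mu(X_0) \;=\; \frac{1}{n+1} \sum_{i=1}^n x_i \;=\; \frac{1}{n+1}\left(n - \frac{1}{d}\sum_{i=1}^n \sqrt{d_i}\right).
\end{align*}

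Finally, with $\varepsilon = \frac{1}{n+1}\left(n - \frac{d_0}{d}\right)$, the comparison $\mu(\lang G) \sim \varepsilon$ unfolds, after cancelling the common $\frac{n}{n+1}$ term and multiplying by the negative quantity $-d < 0$, to $\sum_{i=1}^n \sqrt{d_i} \,\sim'\, d_0$, where $\sim'$ is $\sim$ with reversed direction. Since the \SQRTSUM problem is equivalent under any fixed comparator (by the binary-search argument in the footnote), this yields the claimed equivalence. The only delicate point in the whole argument is ruling out the larger root $1 + \sqrt{d_i}/d$ of the quadratic, which is handled cleanly by the a priori bound $\mu(X_i) \leq 1$.
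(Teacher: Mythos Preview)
Your argument is correct and follows the same computation as the paper: establish $\mu(X_i)=x_i=1-\sqrt{d_i}/d$, sum via \eqref{eq:Xi} to get $\mu(X_0)=\frac{1}{n+1}\bigl(n-\frac{1}{d}\sum_i\sqrt{d_i}\bigr)$, and compare with $\varepsilon$. You are in fact more careful than the paper on two points: (i) you justify selecting the root $1-\sqrt{d_i}/d$ via the a~priori bound $\mu(X_i)\le 1$, whereas the paper relies implicitly on $x_i$ being the least nonnegative solution of \eqref{eq:xi}; and (ii) you correctly note that the inequality direction reverses when unwinding $\mu(\lang G)\sim\varepsilon$ to the \SQRTSUM form---the paper's proof writes the same $\sim$ on both sides, a small slip that is harmless for the hardness reduction since one simply flips the comparison operator.
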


\begin{proof}
    Since $x_i = 1 - \frac {\sqrt {d_i}} d$,
    we have 
    %
    	$\mu(X_0) = 
    		\mu(a_1 \cdot X_1) + \cdots + \mu(a_n \cdot X_n) 
    		= (n+1) (\mu(a_1) \cdot \mu (X_1) + \cdots + \mu(a_n) \cdot \mu(X_n))
    		= \frac 1 {n+1} (\mu(X_1) + \cdots + \mu(X_n))
    		= \frac 1 {n+1} (x_1 + \cdots + x_n)
    		= \frac 1 {n+1} \left(\left(1 - \frac {\sqrt {d_1}} d\right) + \cdots + \left(1 - \frac {\sqrt {d_n}} d\right)\right)
    		= \frac 1 {n+1} \left(n - \frac {\sqrt{d_1} + \cdots + \sqrt{d_n}} d \right)$,
    %
    and thus $\sum_{i=1}^n\sqrt{d_i} \sim d_0$ if, and only if, $\mu(\lang X) \sim \varepsilon$,
    as required.
\end{proof}
%

\section{Discussion}

We have shown novel \PSPACE upper bounds
for several inclusion problems on \UCFG and finite automata.
We did not address language equivalence problems $L = M$,
which in principle can be easier to decide than the corresponding inclusions.
For instance, while $\DCFG \subseteq \DCFG$ is undecidable \cite[Theorem 10.7, Point 2]{HopcroftUllman:1979},
$\DCFG = \DCFG$ is decidable by the result of G. Sénizergues \cite{Senizergues:ICALP:1997}.
It is worth remarking that decidability of the equivalence problem $\UCFG = \UCFG$ is not known.
In fact, this is a special case of the \emph{multiplicity equivalence problem} for \CFG,
which asks whether two \CFGs have the same number of derivations for every word they accept.
Decidability of the latter problem is open as well \cite{Kuich:multiplicity:1994}
and inter-reducible with the language equivalence for probabilistic pushdown automata
\cite{ForejtJancarKieferWorrell:IC:2014}.
The restriction of the $\UCFG = \UCFG$ equivalence problem
to words of a given length has been studied in \cite{Litow:AC:1996}.

\paragraph{Number sequences and the zeroness problem.}

We obtained the \PSPACE upper bound for \UUCFG by reducing to the zeroness problem for conv-rec sequences.
Conv-rec sequences generalise linear difference recurrence with constant coefficients
(a.k.a.~\emph{constant-recursive} or \emph{C-finite} \cite{KauersPaule:C-finite:2011};
c.f.~also \cite{BarloyFijalkowLhoteMazowiecki:CSL:2020} and citations therein)
by allowing the convolution product in the recurrence.
They are a special case of more expressive classes
such as P-recursive \cite[Ch.~7]{KauersPaule:C-finite:2011} (a.k.a.~\emph{holonomic})
and polynomial recursive sequences \cite{CadilhacMazowieckiPapermanPilipczukSenizergues:LICS:2020}.
%
%
The zeroness problem for P-recursive sequences is decidable \cite{Zeilberger:JCAM:1990}
and the same holds for polynomial recursive sequences
(as a corollary of the existence of cancelling polynomials \cite[Theorem 11]{CadilhacMazowieckiPapermanPilipczukSenizergues:LICS:2020}).
However, no complexity upper bounds are known for those more general classes.


\paragraph{Coin-flip measure.}

As a complement to the \PSPACE upper bound for \UUCFG,
we have shown that the coin-flip measure comparison problem
$\mu_\text{coin}(\lang G) \sim \varepsilon$ of a \UCFG $G$
with ${\sim} \in \set{{\leq}, {<}, {\geq}, {>}}$ and $0 \leq \varepsilon \leq 1$
is \SQRTSUM-hard.
The main difficulty is that the measure is generated according to a fixed stochastic process.
If we relax this constraint and generate the measure according to an arbitrary finite Markov process,
then one can obtain \SQRTSUM-hardness already for \DCFG.

It is known that the quantitative decision problem for $\mu_G(\Sigma^*)$
where $G$ is a stochastic context-free grammar (\SCFG)
is \SQRTSUM-hard \cite{EtessamiYannakakis:JACM:2009}. 
Our setting is incomparable:
On the one hand we fix a particular measure,
namely the coin-flip measure $\mu_{\text{coin}}$
(which corresponds to a fixed \SCFG with rules
$X \from \varepsilon \sep a_1 \cdot X \sep \cdots \sep a_n \cdot X$).
On the other hand, we are interested in the quantity $\mu_\text{coin}(\lang G)$
where $G$ is an arbitrary \UCFG
(and thus not necessarily universal).
%

We leave it as an open problem to establish the exact complexity of the universality problem for \UCFG
and the coin-flip measure 1 problem.
When the system of polynomial equations obtained from the grammar is \emph{probabilistic} (PPS%
\footnote{The sum of all coefficients is at most 1.})
the measure 1 problem is in \PTIME \cite{EtessamiYannakakis:JACM:2009}
(and even in strongly polynomial time \cite{EsparzaGaiserKiefer:STACS:2010}).
However, the equations obtained from \UCFG are monotone (MPS) but not PPS in general.
As an example, consider a singleton alphabet $\Sigma = \set a$
and productions of the form
$X_0 \from a$ and, for $n \geq 0$,
$X_{n+1} \from X_n \cdot X_n$.
The corresponding MPS system is $x_0 = \frac 1 {2^2}$
and $x_{n+1} = 2 \cdot x_n^2$.
The former system is not a PPS, since in the second equation the coefficients sum up to $2$.
It may be argued that by the change of variable $z_n := 2 \cdot x_n$
we obtain the system $z_0 = \frac 1 2$ and $z_{n+1} = z_n^2$
which is PPS.
However, this transformation reduces the value 1 problem on the original MPS
to the value $1/2$ problem in the new PPS,
and the latter problem is not known to be in \PTIME.

One source of difficulty in the \UUCFG problem is that witnesses of non-universality can have exponential length.
Extending the previous example,
consider the additional rules $Y_0 \from \varepsilon$
and $Y_{n+1} \from Y_n \sep X_n \cdot Y_n$.
The nonterminal $X_n$ generates a single word $\lang {X_n} = \set{a^{2^n}}$ of length $2^n$.
It can be verified by induction that $Y_n$ generates all words
$\lang{Y_n} = \set{a^0, a^1, \dots, a^{2^n - 1}}$
of length $\leq 2^n - 1$,
and consequently the grammar is unambiguous.
Thus $\lang {Y_n}$ is not universal,
however the shortest witness has length $2^n$.
In terms of measures,
$\mu_\text{coin}(\lang {X_n}) = \frac 1 {2^{2^n+1}}$
and $\mu_\text{coin}(\lang {Y_n}) = 1 - \frac 1 {2^{2^n+1}}$,
and thus \UCFG have measures that can be exponentially close to 0, resp., to 1.
%
Since a word of length $n$ over a unary alphabet has measure $\frac 1 {2^{n+1}} = 2^{-O(n)}$,
if language $L$ is not universal $\mu_\text{coin}(L) < 1$,
then there is a non-universality witness of length at most $\log (1 - \mu_\text{coin}(L))$.
Thus upper bounds on $1-\mu_\text{coin}(L)$ yield upper bounds on the shortest non-universality witness.

\paragraph{The ``$\CFG \subseteq \UFA$'' problem.}

We have shown that $\CFG \subseteq \UFA$ reduces to
$\DCFG \subseteq \UFA$ and, in turn,
the latter reduces to \UUCFG
and thus can be solved in \PSPACE.
This needs not be optimal and there are reasons to suspect that better algorithms may be obtained.
If we interpret a \DCFG $G$ as a stochastic context-free grammar (\SCFG),
then query $\lang G \subseteq \lang A$ is equivalent to $\mu_G(\lang A) = 1$ when $A$ is unambiguous,
where $\mu_G$ is the measure generated by $G$ (a generalisation of the coin-flip measure).
When $A$ is \DFA, $\mu_G(\lang A)$ can be approximated in \PTIME \cite{EtessamiStewartYannakakis:ICALP:2013}.
Generalising this result for $A$ being \UFA would put $\DCFG \subseteq \UFA$ in \PTIME.

\paragraph{The regularity problem for \UCFG.}

There are other problems which are known to be undecidable for \CFG but decidable for \DCFG,
such as the regularity problem \cite{Stearns:Regularity:DPDA:IC:1967,Valiant:Regularity:DPDA:JACM:1975,Shankar:Regularity:DPDA:TCS:1991}.
An interesting open problem \cite{DiekertKopecki:CIAA:2010}
is whether the regularity problem is decidable for \UCFG.

\paragraph{Acknowledgements.}

I warmly thank Alberto Pettorossi
for his encouragement and guidance during my first steps in doing research.
I also thank an anonymous reviewer for his helpful comments on a preliminary version of this draft.

\bibliographystyle{eptcs}
\bibliography{bibliography}

\appendix
\section{Appendix}

\lemmaRepresentation*

\begin{proof}
    Fix an alphabet $\Sigma_n$ and let $m \leq n$ as in the statement of the lemma.
    If $m = n$ then there is no difficulty
    since we can just look at the periodic expansion
    $c = \sum_{i=0} \frac {c_i} {(n+1)^{i+1}}$ of $c$ in base $n+1$,
    where crucially $0 \leq c_i \leq n$,
    and one can just interpret $c_i$ as the number of words of length $i$ that
    the regular expression must accept
    (since there are $n^i$ such words of length $i$, this can always be done,
    perhaps with the exception of $i = 0$).
    However, we work under the more general constraint $m \leq n$,
    which requires a refinement of the analysis above.
    
    There are $m^k$ words of length $k$ over $\Sigma_m \subseteq \Sigma_n$,
    and thus the measure $\mu_\text{coin}(\Sigma_m^*)$ (which is always computed w.r.t.~alphabet $\Sigma_n$) satisfies
    \begin{align}
        \label{eq:mu:Sigma:m}
        \mu_\text{coin}(\Sigma_m^*)
            = \sum_{k=0}^\infty \frac {m^k} {(n+1)^{k+1}}
            = \frac 1 {n+1} \cdot \sum_{k=0}^\infty \left(\frac m {n+1}\right)^k
            = \frac 1 {n+1} \cdot \frac 1 {1 - \frac m {n+1}}
            = \frac 1 {n - m + 1}.
    \end{align}
    We consider two cases.
	If $c =  \frac 1 {n - m + 1}$,
	then just let $e = \Sigma_m^*$,
	and we are done.
	Otherwise, assume $c < \frac 1 {n - m + 1}$.
	Our aim is to find some $k \in \N$ and write $c$ as
	\begin{align}
	    \label{eq:c}
		c = \sum_{i=0}^k \frac {c_i} {(n+1)^{i+1}} +
		    \sum_{j=1}^\infty \frac {d_j} {(n+1)^{k + j + 1}},
		\quad \textrm{ with } 0 \leq c_i \leq m^i \text { and }
		    0 \leq d_j \leq n.
	\end{align}
	Intuitively, $c_i$ counts the number of words of length $i$ in $\lang e$
	for $0 \leq i \leq k$,
	and $d_j$ counts the number of words of length $k+j$ in $\lang e$
	for $j \geq 1$.
	Moreover, in the first part the $c_i$'s are allowed to be very big (up to $m^i$)
	and in the second part the $d_j$'s must be small (up to $n$).
	Thus we will be able to interpret the $d_j$'s as digits in the base $n+1$ expansion of the second quantity.
    %
    %
	We now find the required value for $k$.
	Generalising \eqref{eq:mu:Sigma:m},
	the measure of all words of length at most $k \in \N$ over the sub-alphabet $\Sigma_m$ is
	(always computed w.r.t.~$\Sigma_n$)
	\begin{align}
		\label{eq:low}
		\mu_\text{coin}(\Sigma_m^{\leq k}) = \sum_{i = 0}^k \frac {m^i} {(n+1)^{i+1}} 
		= \frac 1 {n - m + 1} \left(1 - \left(\frac m {n+1}\right)^{k+1} \right).
	\end{align}
	%
	Since $c < \frac 1 {n - m + 1} = \lim_{k \to \infty} \mu_\text{coin}(\Sigma_m^{\leq k})$,
	there is some $k$ \st $c < \mu_\text{coin}(\Sigma_m^{\leq k})$.
	Let $k$ be such a minimal number, i.e.,
	$\mu_\text{coin}(\Sigma_m^{\leq k-1}) \leq c < \mu_\text{coin}(\Sigma_m^{\leq k})$.
	Consequently, for this choice of $k$ we have
	$c_0 = m^0, c_1 = m^1, \dots, c_{k - 1} = m^{k - 1}$,
	and	$0 \leq c_k < m^k$ where
	\begin{align*}
		\mu_{\Sigma_n}(\Sigma_m^{\leq k-1}) + \frac {c_k} {(n+1)^{k+1}}
			\leq c <
				\mu_{\Sigma_n}(\Sigma_m^{\leq k-1}) + \frac {c_k + 1} {(n+1)^{k+1}}.
	\end{align*}
	For complexity considerations,
	we note that $k$ satisfies $c < \mu_\text{coin}(\Sigma_m^{\leq k})$,
	and thus 
	$k > \frac {\log(1-(n-m+1)c)} {\log m - \log (n+1)} - 1 = \frac {-\log(1-(n-m+1)c)} {\log (n+1) - \log m}$.
	The minimal denominator is achieved by $m = n$,
	and the maximal numerator by $m = 1$.
	Performing this substitution,
	we can see that one can take $k > \frac {-\log(1-nc)} {\log (n+1) - \log n}$.
	Since $-\log(1-n c) = O(\log q)$
	and $\log (n+1) - \log n = \log \frac {n+1} n = \log (1 + \frac 1 n) = O(\frac 1 n)$,
	we obtain
	\begin{align}
		\label{eq:k}
		k = O(n \log q).
	\end{align}
	This choice of $k$ also guarantees $n \leq m^k$.
	Let $d := c - \left(\mu_{\Sigma_n}(\Sigma_m^{\leq k-1}) + \frac {c_k} {(n+1)^{k+1}}\right)$
	be the remaining part of $c$ after all terms up to $k$ have been considered:
	Thus, $d$ is small in the sense that
	$0 \leq d < \frac 1 {(n+1)^{k+1}}$.
	Since $\sum_{j=1}^\infty \frac n {(n+1)^{j+1}} = \frac 1 {n+1}$,
	we can write $d$ in base $n+1$ as required by \eqref{eq:c}:
	\begin{align*}
        d = \frac 1 {(n+1)^{k}} \sum_{j=1}^\infty \frac {d_j} {(n+1)^{j+1}}, \quad \textrm{ with } 0 \leq d_j \leq n \leq m^k.
	\end{align*}
	The coefficients are small numbers $d_j$
    at most $n$, which can thus be interpreted as digits in base $n+1$.
	Since $d$ is a rational number,
	the sequence $d_1, d_2, \dots$ is ultimately periodic \cite{HardyWright:Numbers:2008}, \ie,
	there exists a threshold $j_1 \in \N$ and a period $l \in \N$
	\st, for every $j \geq j_1$,
	%
		$d_{j+l} = d_j.$
	%
	Let $\gamma_1 = d_{j_1}, \gamma_2 = d_{j_1 + 1}, \dots, \gamma_l = d_{j_1 + l - 1}$
	be the coefficients comprising the period.
	Consequently, 
	\begin{align}
		\nonumber
		d
			&= \frac 1 {(n+1)^{k}} \left(
				\sum_{j=1}^{j_1-1} \frac {d_j} {(n+1)^{j+1}} +
				\sum_{s=0}^\infty \left(\frac {\gamma_1} {(n+1)^{j_1+sl+1}} + \cdots + \frac {\gamma_l} {(n+1)^{j_1 + sl + l}}  \right)
				\right) \\
			&= \frac 1 {(n+1)^{k}} \left(
				\sum_{j=1}^{j_1-1} \frac {d_j} {(n+1)^{j+1}} +
				\frac 1 {(n+1)^{j_1-1}} \sum_{s=0}^\infty \frac {\gamma} {(n+1)^{(s+1)l+1}}
				\right) = \\
	\label{eq:beta:next}
			&= \frac 1 {(n+1)^{k}} \left(
				\sum_{j=1}^{j_1-1} \frac {d_j} {(n+1)^{j+1}} +
				\frac \gamma {(n+1)^{j_1}} \frac 1 {(n+1)^l - 1}
				\right), 
	\end{align}
	where $\gamma := \gamma_1 (n+1)^{l-1} + \cdots + \gamma_l (n+1)^0$
	is the number of words of length $k+j_1-1+(s+1)l$, 
	for every $s \in \N$.

	We are now ready to build the regular expression $e$.
	We begin with a basic building block.
	For a given length $k$ and a cardinality $0 \leq h \leq m^k$
	written in base $m$ as $h = \sum_{i=0}^k h_i \cdot m^i$ ($0 \leq h_i < m$),
	consider the regular expression $e_{h, k}$
	\st $e_{h, k} = \Sigma_m^k$ if $h = m^k$, and otherwise:
	\begin{align*}
		e_{h, k} =
			a_m^{k-1} \cdot (a_1 + \cdots + a_{h_0}) \cdot \Sigma_m^0
		+	a_m^{k-2} \cdot (a_1 + \cdots + a_{h_1}) \cdot \Sigma_m^1 + \cdots
		+	a_m^0 \cdot (a_1 + \cdots + a_{h_{k-1}}) \cdot \Sigma_m^{k-1}.
	\end{align*}%
	\begin{claim}
	    \label{claim:ehk}
	    The expression $e_{h, k}$ is unambiguous
	    and $\lang{e_{h, k}}$ contains precisely $h$ words of length $k$.
	    Consequently, $\mu_\text{coin}(\lang{e_{h, k}}) = \frac h {(n+1)^{k+1}}$.
	\end{claim}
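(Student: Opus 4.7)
\noindent
The plan is to treat the trivial case $h = m^k$ separately, where $e_{h,k} = \Sigma_m^k$ evidently denotes unambiguously all $m^k$ words of length $k$ over $\Sigma_m$. For the main case $h < m^k$, I rely on the base-$m$ expansion $h = \sum_{i=0}^{k-1} h_i \cdot m^i$ with $0 \le h_i < m$ (note $h_k = 0$ since $h < m^k$), and I analyse each of the $k$ summands of $e_{h,k}$ separately.

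\medskip

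\noindent
First I would verify that the $i$-th summand $a_m^{k-1-i} \cdot (a_1 + \cdots + a_{h_i}) \cdot \Sigma_m^i$ denotes a set of exactly $h_i \cdot m^i$ words, all of length $k$: any word in this set decomposes uniquely as a prefix of $k-1-i$ copies of $a_m$, a middle letter chosen in $h_i$ ways from $\{a_1,\dots,a_{h_i}\}$, and a suffix of length $i$ chosen in $m^i$ ways from $\Sigma_m$. The decomposition is unambiguous simply because the length of each of the three factors is fixed, so the concatenation is unambiguous by a trivial length argument.

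\medskip

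\noindent
The main step is to argue pairwise disjointness of the $k$ summands. The key observation, which I would phrase as a small lemma embedded in the proof, is that a word produced by the $i$-th summand begins with exactly $k-1-i$ occurrences of $a_m$ followed by a letter strictly smaller than $a_m$: indeed the middle letter lies in $\{a_1,\dots,a_{h_i}\}$ and $h_i < m$, so this letter is not $a_m$. Hence the position of the first non-$a_m$ letter, which equals $k-i$, identifies the index $i$ uniquely. This disjointness both yields unambiguity of the outer sum and gives the count
\begin{align*}
    |\lang{e_{h,k}}| \;=\; \sum_{i=0}^{k-1} h_i \cdot m^i \;=\; h.
\end{align*}

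\medskip

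\noindent
The measure statement then follows immediately: all $h$ words in $\lang{e_{h,k}}$ have length $k$, and every length-$k$ word has coin-flip measure $\frac{1}{(n+1)^{k+1}}$ (computed with respect to the ambient alphabet $\Sigma_n$), so $\mu_\text{coin}(\lang{e_{h,k}}) = \frac{h}{(n+1)^{k+1}}$. The only slightly delicate point is making sure to check the edge cases $h_i = 0$ (where the $i$-th summand is the empty language and contributes nothing, consistently with $h_i \cdot m^i = 0$) and $i = k-1$ (where the prefix $a_m^{k-1-i}$ is empty), neither of which causes any genuine difficulty.
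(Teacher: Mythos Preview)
Your proof is correct and follows essentially the same approach as the paper: count $h_i \cdot m^i$ words in the $i$-th block via the fixed-length factorisation, and obtain disjointness of the blocks from the fact that $h_i < m$ forces the middle letter to differ from $a_m$, so the length of the maximal $a_m$-prefix determines $i$. You are merely more explicit than the paper in spelling out the edge cases ($h = m^k$, $h_i = 0$, empty prefix) and in phrasing the disjointness argument via the position of the first non-$a_m$ letter.
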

	\begin{proof}[Proof (of the claim)]
        All words recognised by $e_{h, k}$ have length $k$.
        The $i$-th block
        $a_m^{k-i-1} \cdot (a_1 + \cdots + a_{h_i}) \cdot \Sigma_m^i$
        contains $h_i \cdot m^i$ words
        since there are $h_i$ distinct options to choose a letter from
        $a_1 + \cdots + a_{h_i}$
        and $m^i$ distinct options to choose a word from $\Sigma_m^i$.
        The languages of different blocks are disjoint,
        since the prefix $a_m^{k-i-1}$ uniquely determines the block
        (since $h_i < m$ and thus $a_1 + \cdots + a_{h_i}$ does not contain $a_m$).
        Similarly, the two concatenations in a block are unambiguous,
        and thus $e_{h, k}$ is unambiguous.
	\end{proof}
    The sought expression of measure $c$ is
	\begin{align}
	    e =
	        \Sigma_m^{\leq k - 1}
	        + e_{c_k,k}
	        + e_{d_1, k + 1} + \cdots
	        + e_{d_{j_1-1}, k + j_1 - 1}
	        + e_{\gamma, k + j_1 - 1 + l} \cdot e_{1, l}^*.
	\end{align}
	\begin{claim}
	    The expression $e$ is well defined provided $k \geq l \cdot \frac {\log (n+1)} {\log m} - 1 + l$.
	\end{claim}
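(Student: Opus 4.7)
The plan is to verify term by term that every sub-expression of the form $e_{h, k'}$ occurring in
\[
    e = \Sigma_m^{\leq k - 1} + e_{c_k,k} + e_{d_1, k+1} + \cdots + e_{d_{j_1-1}, k + j_1 - 1} + e_{\gamma, k + j_1 - 1 + l} \cdot e_{1, l}^*
\]
satisfies $h \leq m^{k'}$, which by \cref{claim:ehk} is precisely the well-definedness condition. The summands split into four kinds: the finite prefix $\Sigma_m^{\leq k-1}$ (no such condition to check); the ``top'' block $e_{c_k, k}$; the intermediate blocks $e_{d_j, k+j}$ for $1 \leq j \leq j_1 - 1$; and the periodic tail $e_{\gamma, k + j_1 - 1 + l}$ together with $e_{1, l}$ inside the Kleene star.

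The first three families give easy constraints. For $e_{c_k, k}$, the choice of $c_k$ made when defining the representation \eqref{eq:c} guarantees $c_k \leq m^k$ directly. For $e_{d_j, k+j}$ with $j \geq 1$, the coefficients $d_j$ are base-$(n+1)$ digits, so $d_j \leq n$; the hypothesis on $k$ is already more than enough to force $m^{k+j} \geq n+1$. Finally, the trailing block $e_{1, l}$ trivially satisfies $1 \leq m^l$.

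The binding case is the periodic block $e_{\gamma, k + j_1 - 1 + l}$. Unwinding the definition $\gamma = \gamma_1 (n+1)^{l-1} + \cdots + \gamma_l (n+1)^0$ together with the digit bound $\gamma_i \leq n$, the geometric sum gives $\gamma \leq (n+1)^l - 1$, hence $\gamma < (n+1)^l$. It therefore suffices to ensure $(n+1)^l \leq m^{k + j_1 - 1 + l}$. Taking logarithms, rearranging and instantiating the worst-case value of $j_1$ yields precisely the hypothesis $k \geq l \cdot \frac{\log(n+1)}{\log m} - 1 + l$.

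The main (minor) obstacle is the bookkeeping of the exponent $k + j_1 - 1 + l$ of the periodic block against the worst-case value of $j_1$; once this is handled, the rest is routine verification of elementary inequalities on exponents, with no deeper argument required.
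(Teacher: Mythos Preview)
Your proposal is correct and follows the same approach as the paper: check the well-definedness condition $h \leq m^{k'}$ for each sub-expression $e_{h,k'}$, with the only substantive case being $e_{\gamma,\, k+j_1-1+l}$, handled via the bound $\gamma < (n+1)^l$ together with the logarithmic hypothesis on $k$. The paper's proof is terser (it omits the trivially satisfied cases $e_{c_k,k}$ and $e_{1,l}$ that you include) and simply drops $j_1 \geq 0$ to reduce to $(n+1)^l \leq m^{k-1+l}$; one small caveat is that this last inequality does not match the hypothesis \emph{precisely} as you say --- the stated hypothesis is slightly stronger (by $2(l-1)$) than what is strictly needed --- but it certainly implies the required bound, so your argument goes through.
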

	\begin{proof}[Proof (of the claim)]
	    The $e_{d_j,k+j}$'s are well defined since $d_j \leq n \leq m^k \leq m^{k+j}$
	    and thus there are $d_j$ words of length $k+j$ over alphabet $\Sigma_m$.)
	    In order for $e_{\gamma, k + j_1 - 1 + l}$ to be well defined we need
	    $\gamma \leq m^{k + j_1 - 1 + l}$.
	    Since $\gamma \leq (n+1)^l$ and $n \leq m^k$,
	    it suffice to require $(n+1)^l \leq m^{k - 1 + l}$.
	    The latter condition follows from the assumption.
	\end{proof}
	
	\begin{claim}
	    The expression $e$ is unambiguous and $\mu_\text{coin}(\lang e) = c$.
	\end{claim}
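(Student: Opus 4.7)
The plan is to verify the two conjuncts separately: unambiguity of $e$, and the identity $\mu_\text{coin}(\lang e) = c$. Both will follow from combining Claim~\ref{claim:ehk} about the building blocks $e_{h,k}$ with a length analysis of the summands of $e$.

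For unambiguity, I would first observe that the four groups of summands in $e$, namely $\Sigma_m^{\leq k-1}$, $e_{c_k,k}$, the blocks $e_{d_j, k+j}$ for $1 \leq j \leq j_1 - 1$, and $e_{\gamma, k+j_1-1+l} \cdot e_{1,l}^*$, recognise words of lengths in the pairwise-disjoint sets $\{0,\dots,k-1\}$, $\{k\}$, $\{k+1,\dots,k+j_1-1\}$, and $\{k+j_1-1+tl : t \geq 1\}$, respectively. This reduces unambiguity of $e$ to unambiguity of each summand in isolation. The $e_{h,k'}$ blocks are unambiguous by Claim~\ref{claim:ehk}. For the last summand, unfolding the definition of $e_{1,l}$ using the base-$m$ digits of $1$ produces the single word $a_m^{l-1} a_1$, so $\lang{e_{1,l}^*} = \{(a_m^{l-1} a_1)^s : s \geq 0\}$ is unambiguous; and the concatenation $e_{\gamma, k+j_1-1+l} \cdot e_{1,l}^*$ is unambiguous because the $e_{\gamma, k+j_1-1+l}$ prefix has fixed length $k+j_1-1+l$, forcing the factorisation.

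For the measure, I would apply the formula $\mu_\text{coin}(\lang{e_{h,k'}}) = h/(n+1)^{k'+1}$ from Claim~\ref{claim:ehk}, together with \eqref{eq:low} for $\mu_\text{coin}(\Sigma_m^{\leq k-1})$ and the unambiguous-concatenation rule \eqref{eq:mu:LM}, so that $\mu_\text{coin}(\lang e)$ is the disjoint sum of the measures of the four groups. Matching against the decomposition \eqref{eq:beta:next} of $c$, the first two groups contribute $c - d$ by the definition of $d$, and the middle group contributes the aperiodic part $\sum_{j=1}^{j_1-1} d_j/(n+1)^{k+j+1}$ of $d$. For the Kleene-star summand I would sum the geometric series
\[ \sum_{s=0}^\infty \frac{\gamma}{(n+1)^{k+j_1+l+sl}} = \frac{\gamma}{(n+1)^{k+j_1}\bigl((n+1)^l - 1\bigr)}, \]
which is precisely the periodic tail of $d$ appearing in \eqref{eq:beta:next}. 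Adding the three contributions gives $(c - d) + d = c$, as required.

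The main obstacle is the arithmetic bookkeeping in the Kleene-star term: one has to carefully reconcile the $+1$ exponent shift coming from the $1/(n+1)^{k'+1}$ normalisation of $\mu_\text{coin}(\lang{e_{h,k'}})$ with the form of the periodic tail in \eqref{eq:beta:next}, so that the geometric sum reproduces $\gamma/((n+1)^{k+j_1}((n+1)^l-1))$ exactly. The remainder is a routine disjoint-length decomposition.
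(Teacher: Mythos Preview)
Your proposal is correct and follows essentially the same route as the paper: a length-disjointness argument for unambiguity of the summands, Claim~\ref{claim:ehk} for the building blocks, and then summing the measures against the decomposition~\eqref{eq:beta:next}. The only cosmetic difference is that the paper first derives a general formula $\mu_\text{coin}(L^*) = \frac{1}{n+1}\cdot\frac{1}{1-(n+1)\mu_\text{coin}(L)}$ and applies it to $e_{1,l}^*$, whereas you compute the same quantity by directly summing the geometric series over word lengths; both yield $\frac{\gamma}{(n+1)^{k+j_1}((n+1)^l-1)}$.
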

	\begin{proof}[Proof (of the claim)]
	    Unambiguity follows from the fact that
	    \begin{inparaenum}[1)]
	        \item all disjuncts of $e$ recognise disjoint languages,
    	    since $\Sigma_m^{\leq k - 1}$ recognises only words of length $\leq k-1$,
    	    $e_{c_k,k}$ only words of length $k$,
    	    $e_{d_1, k + 1}$ only words of length $k+1$, \dots,
    	    $e_{d_{j_1-1}, k + j_1 - 1}$ only words of length $k + j_1 - 1$, and
    	    $e_{\gamma, k + j_1 - 1 + l} \cdot e_{1, l}^*$ only words of length $ \geq k + j_1 - 1 + l$, and
	        \item all disjuncts $e_{c_k,k}, e_{d_1, k + 1}, \dots, e_{d_{j_1-1}, k + j_1 - 1}$
	        are unambiguous by \cref{claim:ehk},
	        and $e_{\gamma, k + j_1 - 1 + l} \cdot e_{1, l}^*$ is unambiguous
	        since $e_{1, l}^*$ contains only one word of length $l$.
	    \end{inparaenum}
	    In general $\mu_\text{coin}(L^k) = (n+1)^{k-1} \mu_\text{coin}(L)^k$
	    (by \eqref{eq:mu:LM}, when such power is unambiguous)
	    and thus $\mu_\text{coin}(L^*) = \mu_\text{coin}(L^0 \cup L^1 \cup \cdots) = \mu_\text{coin}(L^0) + \mu_\text{coin}(L^1) + \cdots = \frac 1 {n+1} (((n+1)\mu_\text{coin}(L))^0 + ((n+1)\mu_\text{coin}(L))^1 + \cdots) = \frac 1 {n+1} \cdot \frac 1 {1 - (n+1)\mu_\text{coin}(L)}$ when the Kleene iteration is unambiguous.
	    Since $\mu_\text{coin}(\lang {e_{1, l}}) = \frac 1 {(n+1)^{l+1}}$,
	    we obtain $\mu_\text{coin}(\lang {e_{1, l}^*}) = \frac 1 {n+1} \cdot \frac 1 {1 - (n+1)\frac 1 {(n+1)^{l+1}}} = \frac {(n+1)^{l-1}} {(n+1)^l - 1}.$
	    Consequently,
	    \begin{align*}
	        \mu_\text{coin}(\lang {e_{\gamma, k + j_1 - 1 + l} \cdot e_{1, l}^*}) 
	            &= (n+1) \cdot \mu_\text{coin}(\lang {e_{\gamma, k + j_1 - 1 + l}}) \cdot \mu_\text{coin}(\lang{e_{1, l}^*}) = \\
	            &= (n+1) \cdot \frac \gamma {(n+1)^{k + j_1 + l}} \cdot \frac {(n+1)^{l-1}} {(n+1)^l - 1} = \\
	            &= \frac {\gamma} {(n+1)^{k+j_1}} \cdot \frac 1 {(n+1)^l - 1}.
        \end{align*}
        By \eqref{eq:beta:next}, $\mu_\text{coin}(\lang e) = \mu_\text{coin}(\Sigma_m^{\leq k - 1}) + \mu_\text{coin}(\lang{e_{c_k, k}}) + \mu_\text{coin}(\lang{e_{d_1, k + 1}}) + \cdots + \mu_\text{coin}(\lang{e_{d_{j_1-1}, k + j_1 - 1}}) + \mu_\text{coin}(\lang {e_{\gamma, k + j_1 - 1 + l} \cdot e_{1, l}^*}) = c$,
        as required.
	\end{proof}
    \begin{claim}
        \label{claim:size}
        The expression $e$ has size $O((n \log q +j_1+l+1)^3)$.
    \end{claim}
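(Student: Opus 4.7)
The plan is to bound the size of each subexpression appearing in $e$ separately and then aggregate, invoking the estimate $k = O(n \log q)$ from \eqref{eq:k} only at the end.

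The key ingredient is a size bound for the building block $e_{h, j}$. It is a disjunction of at most $j$ blocks of the form $a_m^{j-i-1} \cdot (a_1 + \cdots + a_{h_i}) \cdot \Sigma_m^i$, each of size $O(j + n)$: the prefix $a_m^{j-i-1}$ has length at most $j$, the alternation $a_1 + \cdots + a_{h_i}$ has size at most $n$, and $\Sigma_m^i$ admits a compact representation of size $O(n + \log j)$. Summing over the $j$ blocks yields $|e_{h, j}| = O(j(j + n))$.

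With this in hand, I would sum the contributions of the disjuncts of $e$:
\begin{align*}
    |\Sigma_m^{\leq k-1}| &= O(k(k + n)), \\
    |e_{c_k, k}| &= O(k(k + n)), \\
    \sum_{j=1}^{j_1-1} |e_{d_j, k+j}| &= O(j_1 (k + j_1)(k + j_1 + n)), \\
    |e_{\gamma, k+j_1-1+l}| &= O((k + j_1 + l)(k + j_1 + l + n)), \\
    |e_{1, l}^*| &= O(l + 1).
\end{align*}
Setting $K := k + j_1 + l + n + 1$, each of the above is $O(K^3)$, whence $|e| = O(K^3)$.

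Finally, \eqref{eq:k} gives $k = O(n \log q)$, and in the non-trivial case $q \geq 2$ (the boundary cases $c = 0$ and $c = \tfrac{1}{n-m+1}$ being handled separately at the start of the proof) we have $n \leq n \log q$, so that $K = O(n \log q + j_1 + l + 1)$. Hence $|e| = O((n \log q + j_1 + l + 1)^3)$, as required. The main (minor) subtlety is picking a compact representation for the literal powers $\Sigma_m^i$ and $a_m^{j-i-1}$ inside each block; once that is done, the dominant contribution comes from the $j_1$ middle terms $e_{d_j, k+j}$ and delivers the claimed cubic bound without further work.
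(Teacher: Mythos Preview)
Your argument is correct and follows essentially the same route as the paper's own proof: bound $|e_{h,j}| = O(j(j+n))$, aggregate the disjuncts into $O((k+n+j_1+l+1)^3)$, then substitute $k = O(n\log q)$ from \eqref{eq:k}. Your breakdown is in fact more explicit than the paper's (which compresses the aggregation into one line), and your remark that $n$ can be absorbed into $n\log q$ in the nondegenerate case $q\geq 2$ plugs a small gap that the paper leaves implicit.
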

    \begin{proof}[Proof (of the claim)]
        The size of $e_{h, k}$ is $\card {e_{h, k}} = O(k \cdot (m+k)) = O(k \cdot (n+k))$,
        since $m \leq n$.
        Thus the size of $e$ is $\card e \leq O((k+n+j_1+l+1)^3)$.
        By \eqref{eq:k}, $\card e \leq O((n \log q +j_1+l+1)^3)$, as required.
    \end{proof}

    If we further assume that $q \divides (n+1)^\ell$ for some $\ell \in \N$,
    then we argue that we can take $e$ to have size polynomial in $\log q$, $n$, and $\ell$.
	By \eqref{eq:k} and the additional assumption on $q$,
	\begin{align}
		\label{eq:k:small}
		k = O(n \ell \log (n+1)).
	\end{align}
	We can write $d = \frac r s$ with $r, s \in \N$ and $r \leq s$ relatively prime,
	where $s \divides (n+1)^{k+1}$.
	If we decompose the base $n+1$ in prime factors as
	$n+1 = p_1^{z_1} p_2^{z_2} \cdots p_m^{z_m}$
	with $z_1, \dots, z_m \in \N$,
	then $s$ is of the form $s = p_1^{t_1} p_2^{t_2} \cdots p_m^{t_m}$ with $t_i \leq (k+1)z_i$.
	By \cite[Theorem 136]{HardyWright:Numbers:2008}%
	\footnote{Strictly speaking, \cite[Theorem 136]{HardyWright:Numbers:2008} requires that the base $n+1$ be a product of primes. However, as explained in the paragraph preceding the statement of the theorem,
	no difficulty arises if $n+1$ has square divisors; c.f.~also the first footnote on page 145.},
	$d$ can be written in base $(n+1)$ with a \emph{finite} expansion of length
	$j_1 = \max\set{\frac {t_1} {z_1}, \dots, \frac {t_m} {z_m}} = O(k+1)$.
	In particular, the period has zero length $l = 0$.
	By \eqref{eq:k:small}, $j_1 = O(n \ell \log (n+1))$.
	By applying \cref{claim:size} to this case, we obtain a regular expression $e$ of size
	%
		$O((n \log q +j_1+l+1)^3) \leq O((n \log q +n \ell \log (n+1) +1)^3)$
	%
	which is polynomial in $\log q = O(\ell \log n)$, $\ell$, and $n$, 
	as required.
\end{proof}
\end{document}